\newtheorem{theorem}{Theorem}
\newtheorem{corollary}[theorem]{Corollary}
\newtheorem{example}[theorem]{Example}
\newtheorem{lemma}[theorem]{Lemma}
\newtheorem{proposition}[theorem]{Proposition}
\newenvironment{proof}[1][Proof]{\noindent\textbf{#1.} }{\ \rule{0.5em}{0.5em}}
\begin{document}

\title{On Quantum Codes Obtained From Cyclic Codes Over $%
F_{2}+vF_{2}+v^{2}F_{2}$}
\author{Abdullah Dertli$^{a}$, Yasemin Cengellenmis$^{b}$, Senol Eren$^{a}$
\and $\left( a\right) $ Ondokuz May\i s University, Faculty of Arts and
Sciences, \and Mathematics Department, Samsun, Turkey \and %
abdullah.dertli@gmail.com, seren@omu.edu.tr \and $\left( b\right) $ Trakya
University, Faculty of Arts and Sciences, \and Mathematics Department,
Edirne, Turkey \and ycengellenmis@gmail.com}
\maketitle

\begin{abstract}
A new Gray map which is both an isometry and a weight preserving map from $%
R=F_{2}+vF_{2}+v^{2}F_{2}$ to $F_{2}^{3}$ is defined. A construction for
quantum error correcting codes from cyclic codes over finite ring $%
R=F_{2}+vF_{2}+v^{2}F_{2},$ $v^{3}=v$ is given.The parameters of quantum
codes which are obtained from cyclic codes over $R$ are determined.
\end{abstract}

\section{Introduction}

Quantum computer has extraordinary power and it can certain hard problems
much more quickly than any classical computer. It is a computation device
that makes direct use of quantum mechanical phenomena to perform on data.

Quantum computer inevitably interacts with its surrounding, resulting in
decoherence and hence in the decay of quantum information stored in the
device.

Quantum error correction is used in quantum computing to protect quantum
information from errors due to decoherence and other quantum noise.
Effective techniques for quantum error correction were first developed by
Shor and independently by Steane. They discovered quantum error correcting
codes.

Although the theory quantum error correcting codes has striking differences
from the theory classical error correcting codes, Calderbank et al. gave a
way to construct quantum error correcting codes from classical error
correcting codes.

Many good quantum codes have been constructed by using classical cyclic
codes over $F_{q}$ with self orthogonal (or dual containing) properties.

Some authors constructed quantum codes by using linear codes over finite
rings. For example, in $\left[ 4\right] $, J.Qian et al. gave a new method
to obtain self-orthogonal codes over $F_{2}$. They gave a construction for
quantum error correcting codes starting from cyclic codes over finite ring, $%
F_{2}+uF_{2}$, $u^{2}=0.$ X.Kai, S.Zhu gave construction for quantum codes
from linear and cyclic codes over $F_{4}+uF_{4}$, $u^{2}=0$ in $\left[ 7%
\right] .$ They derived Hermitian self-orthogonal codes over $F_{4}$ as Gray
images of linear and cyclic codes over $F_{4}+uF_{4}.$ In $\left[ 9\right] $%
, X.Yin and W.Ma gave an existence condition of quantum codes which are
derived from cyclic codes over finite ring $F_{2}+uF_{2}+u^{2}F_{2},$ $%
u^{3}=0$ with Lee metric. J.Qian gave a new method of constructing quantum
error correcting codes from cyclic codes over finite ring $F_{2}+vF_{2},$ $%
v^{2}=v$, for arbitrary length $n$\ in $\left[ 3\right] .$

This paper is organized as follows. In section $2,$ we give some basic
knowledges about the finite ring $R,$ cyclic code, dual code. In section $3$%
, we define a new Gray map from $R$ to $F_{2}^{3},$ Lee weights of elements
of $R$ and Lee distance in the linear codes over $R.$ We show that if $C$ is
self orthogonal so is $\psi \left( C\right) $ and we obtain in the Gray
images of cyclic codes over $R.$ In section $4$, a sufficient and necessary
condition for cyclic code over $R$ that contains its dual is given. The
parameters of quantum error correcting codes are obtained from cyclic codes
over $R$.

\section{Preliminaries}

Let $R$ be the ring $F_{2}+vF_{2}+v^{2}F_{2}$ where $v^{3}=v$ and $%
F_{2}=\left\{ 0,1\right\} $. $R$ is a commutative ring with eight elements. $%
R$ is semilocal ring with two maximal ideals and principal ideal ring with
characteristic $2$. It is not finite chain ring. It has six ideals. The
ideal with $8$ elements which \ generate the ring $R$ is as follows

\begin{equation*}
\left\langle 1\right\rangle =\left\langle 1+v+v^{2}\right\rangle =R
\end{equation*}

The two maximal ideals with $4$ elements are as follows

\begin{eqnarray*}
\left\langle v\right\rangle &=&\left\langle v^{2}\right\rangle =\left\{
0,v,v^{2},v+v^{2}\right\} \\
\left\langle 1+v\right\rangle &=&\left\{ 0,1+v,1+v^{2},v+v^{2}\right\}
\end{eqnarray*}

The ideals with two elements are as follows

\begin{eqnarray*}
\left\langle 1+v^{2}\right\rangle &=&\left\{ 0,1+v^{2}\right\} \\
\left\langle v+v^{2}\right\rangle &=&\left\{ 0,v+v^{2}\right\}
\end{eqnarray*}

The element $1$ and $1+v+v^{2}$ are two units of $R.$ The zero divisors are $%
0,v,v^{2},1+v,v+v^{2},1+v+v^{2}.$

A linear code $C$ over $R$ length $n$ is a $R-$submodule of $R^{n}$. An
element of $C$ is called a codeword.

For any $x=\left( x_{0},x_{1},...,x_{n-1}\right) $, $y=\left(
y_{0},y_{1},...,y_{n-1}\right) $ the inner product is defined as

\begin{equation*}
x.y=\sum_{i=0}^{n-1}x_{i}y_{i}
\end{equation*}

If $x.y=0$ then $x$ and $y$ are said to be orthogonal. Let $C$ be linear
code of length $n$ over $R$, the dual code of $C$ 
\begin{equation*}
C^{\perp }=\left\{ x:\forall y\in C,x.y=0\right\}
\end{equation*}%
which is also a linear code over $R$ of length $n$. A code $C$ is self
orthogonal if $C\subseteq C^{\perp }$ and self dual if $C=C^{\perp }$.

A cyclic code $C$ over $R$ is a linear code with the property that if $%
c=\left( c_{0},c_{1},...,c_{n-1}\right) \in C$ then $\sigma \left( C\right)
=\left( c_{n-1},c_{0},...,c_{n-2}\right) \in C.$ A subset $C$ of $R^{n}$ is
a linear cyclic code of length $n$ iff it is polynomial representation is an
ideal of $R\left[ x\right] /\left\langle x^{n}-1\right\rangle .$

Let $C$ be code over $F_{2}$ of length $n$ and $\acute{c}=\left( \acute{c}%
_{0},\acute{c}_{1},...,\acute{c}_{n-1}\right) $ be a codeword of $C.$ The
Hamming weight of $\acute{c}$ is defined as $w_{H}\left( \acute{c}\right)
=\dsum\limits_{i=0}^{n-1}w_{H}\left( \acute{c}_{i}\right) $ where $%
w_{H}\left( \acute{c}_{i}\right) =1$ if $\acute{c}_{i}=1$ and $w_{H}\left( 
\acute{c}_{i}\right) =0$ if $\acute{c}_{i}=0.$ Hamming distance of $C$ is
defined as $d_{H}\left( C\right) =\min d_{H}\left( c,\acute{c}\right) ,$
where for any $\acute{c}\in C,$ $c\neq \acute{c}$ and $d_{H}\left( c,\acute{c%
}\right) $ is Hamming distance between two codewords with $d_{H}\left( c,%
\acute{c}\right) =w_{H}\left( c-\acute{c}\right) .$

Let $a\in F_{2}^{3n}$ with $a=\left( a_{0},a_{1},...,a_{3n-1}\right) =\left(
a^{\left( 0\right) }\left\vert a^{\left( 1\right) }\right\vert a^{\left(
2\right) }\right) ,$ $a^{\left( i\right) }\in F_{2}^{n}$ for $i=0,1,2.$ Let $%
\varphi $ be a map from $F_{2}^{3n}$ to $F_{2}^{3n}$ given by $\varphi
\left( a\right) =\left( \sigma \left( a^{\left( 0\right) }\right) \left\vert
\sigma \left( a^{\left( 1\right) }\right) \right\vert \sigma \left(
a^{\left( 2\right) }\right) \right) $ where $\sigma $ is a cyclic shift from 
$F_{2}^{n}$ to $F_{2}^{n}$ given by $\sigma \left( a^{\left( i\right)
}\right) =((a^{\left( i,n-1\right) }),(a^{\left( i,0\right) }),...,$ $\ \ \
\ \ (a^{\left( i,n-2\right) }))$ for every $a^{\left( i\right) }=\left(
a^{\left( i,0\right) },...,a^{\left( i,n-1\right) }\right) $ where $%
a^{\left( i,j\right) }\in F_{2}$, $j=0,1,...,n-1.$ A code of length $3n$
over $F_{2}$ is said to be quasi cyclic code of index $3$ if $\varphi \left(
C\right) =C.$

\section{Gray Map And Gray Images Of Cyclic Codes Over $R$}

Let $x=a+vb+v^{2}c$ be an element of $R$ where $a,b,c\in F_{2}$. We define
Gray map $\psi $ from $R$ to $F_{2}^{3}$ by

\begin{eqnarray*}
\psi &:&R\rightarrow F_{2}^{3} \\
a+vb+v^{2}c &\mapsto &\psi \left( a+vb+v^{2}c\right) =\left( a,b,a+c\right)
\end{eqnarray*}

From definition, the Lee weight of elements of $R$ are defined as follows

\begin{eqnarray*}
w_{L}\left( 0\right) &=&0\text{ \ \ \ \ \ \ \ \ \ \ \ \ \ \ \ \ \ \ \ }%
w_{L}\left( 1+v\right) =3 \\
w_{L}\left( 1\right) &=&2\text{ \ \ \ \ \ \ \ \ \ \ \ \ \ \ \ \ \ \ \ }%
w_{L}\left( 1+v^{2}\right) =1 \\
w_{L}\left( v\right) &=&1\text{ \ \ \ \ \ \ \ \ \ \ \ \ \ \ \ \ \ \ \ }%
w_{L}\left( v+v^{2}\right) =2 \\
w_{L}\left( v^{2}\right) &=&1\text{ \ \ \ \ \ \ \ \ \ \ \ \ \ \ \ \ \ \ \ }%
w_{L}\left( 1+v+v^{2}\right) =2
\end{eqnarray*}

Let $C$ be a linear code over $R$ of length $n$. For any codeword $c=\left(
c_{0},...,c_{n-1}\right) $ the Lee weight of $c$ is defined as $w_{L}\left(
c\right) =\dsum\limits_{i=0}^{n-1}w_{L}\left( c_{i}\right) $ and the Lee
distance of $C$ is defined as $d_{L}\left( C\right) =\min d_{L}\left( c,%
\acute{c}\right) ,$ where for any $\acute{c}\in C,$ $c\neq \acute{c}$ and $%
d_{L}\left( c,\acute{c}\right) $ is Lee distance between two codewords with $%
d_{L}\left( c,\acute{c}\right) =w_{L}\left( c-\acute{c}\right) .$ Gray map $%
\psi $ can be extended to map from $R^{n}$ to $F_{2}^{3n}.$

\begin{theorem}
The Gray map $\psi $ is a weight preserving map from $\left( R^{n},\text{Lee
weight}\right) $ to $\left( F_{2}^{3n},\text{Hamming weight}\right) .$
Moreover it is an isometry from $R^{n}$ to $F_{2}^{3n}.$
\end{theorem}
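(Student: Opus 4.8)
The plan is to reduce the statement to two elementary facts: that $\psi$ is additive, and that it preserves weight on each single symbol of $R$. From these, both assertions follow formally, so the argument is essentially a finite verification dressed up by linearity.

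First I would observe that $\psi : R \to F_2^3$ is $F_2$-linear. Writing $x = a + vb + v^2c$ with $a,b,c \in F_2$, the image $\psi(x) = (a,b,a+c)$ is an $F_2$-linear function of the coordinate vector $(a,b,c)$, so $\psi(x+y) = \psi(x) + \psi(y)$ for all $x,y \in R$. Extending $\psi$ to $R^n \to F_2^{3n}$ coordinatewise (equivalently, rearranging the $3n$ output coordinates into the three blocks $(a \mid b \mid a+c)$, $a,b,c \in F_2^n$, as is natural for the quasi-cyclic structure) preserves additivity, so $\psi(x-y) = \psi(x) - \psi(y)$ for all $x,y \in R^n$. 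One may also note in passing that $\psi$ is a bijection, since $a,b,c$ are recovered from $(a,b,a+c)$, and hence the extension $R^n \to F_2^{3n}$ is a bijection as well.

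Next, a direct inspection of the eight elements of $R$ shows $w_H(\psi(s)) = w_L(s)$ for every $s \in R$; this is exactly the content of the Lee-weight table displayed above, since for instance $\psi(1+v) = (1,1,1)$, $\psi(1+v^2) = (1,0,0)$, $\psi(v+v^2) = (0,1,1)$ and $\psi(1+v+v^2) = (1,1,0)$ have Hamming weights $3$, $1$, $2$ and $2$ respectively. Since the Lee weight on $R^n$ and the Hamming weight on $F_2^{3n}$ are both additive over coordinates, for any $c = (c_0,\ldots,c_{n-1}) \in R^n$ we obtain
\[
w_H(\psi(c)) = \sum_{i=0}^{n-1} w_H(\psi(c_i)) = \sum_{i=0}^{n-1} w_L(c_i) = w_L(c),
\]
which is the asserted weight-preserving property.

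Finally, the isometry follows at once from the previous two steps: for any $x,y \in R^n$,
\[
d_H(\psi(x),\psi(y)) = w_H(\psi(x)-\psi(y)) = w_H(\psi(x-y)) = w_L(x-y) = d_L(x,y),
\]
using additivity of $\psi$ together with the weight-preserving property just established. There is no serious obstacle here; the only points deserving care are making the coordinatewise extension of $\psi$ explicit so that $w_H(\psi(c))$ genuinely splits as the sum of the block weights $w_H(\psi(c_i))$, and confirming that the stated Lee-weight table is consistent with the definition of $\psi$ — both of which are finite checks.
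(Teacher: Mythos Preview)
Your proof is correct and complete. The paper itself does not give an argument for this theorem at all---it simply writes ``Similar to proof of theorem 3.5 in [9]''---so your explicit verification (linearity of $\psi$, the coordinatewise check that $w_H(\psi(s))=w_L(s)$ for each $s\in R$, and the resulting isometry identity) is in fact more detailed than what appears in the paper, and is exactly the standard route such references point to.
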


\begin{proof}
Similar to proof of theorem $3.5$ in $\left[ 9\right] $.
\end{proof}

\begin{theorem}
If $C$ is an $\left( n,k,d_{L}\right) $ linear codes over $R$ then $\psi
\left( C\right) $ is a $\left( 3n,k,d_{H}\right) $ linear codes over $F_{2},$
where $d_{H}=d_{L}.$
\end{theorem}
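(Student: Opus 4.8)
The plan is to check the three parameters of $\psi(C)$ in turn --- length, size, and minimum distance --- using the preceding theorem (the weight-preserving isometry property) together with the $F_{2}$-linearity of $\psi$. \emph{Linearity and length:} since the only scalars of $F_{2}$ are $0$ and $1$, to see that $\psi\colon R\to F_{2}^{3}$ is $F_{2}$-linear it is enough to check additivity, $\psi(x+y)=\psi(x)+\psi(y)$. Writing $x=a+vb+v^{2}c$ and $y=a'+vb'+v^{2}c'$ with $a,b,c,a',b',c'\in F_{2}$, this is immediate from $\psi(a+vb+v^{2}c)=(a,b,a+c)$, because each of the three output coordinates is an $F_{2}$-linear function of $(a,b,c)$. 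Extending $\psi$ coordinatewise to $R^{n}\to F_{2}^{3n}$ preserves additivity, so $\psi(C)$ is an $F_{2}$-subspace of $F_{2}^{3n}$, i.e. a binary linear code of length $3n$.

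\emph{Size.} The kernel of $\psi\colon R\to F_{2}^{3}$ is trivial: $(a,b,a+c)=(0,0,0)$ forces $a=b=0$ and then $c=a=0$, so $\psi$ is a bijection from $R$ onto $F_{2}^{3}$. Hence its coordinatewise extension is injective on $R^{n}$, in particular injective on $C$, and therefore $|\psi(C)|=|C|$. So $\psi(C)$ has the same size parameter $k$ as $C$ (if $C$ has $2^{k}$ codewords, $\psi(C)$ is a $k$-dimensional binary code).

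\emph{Minimum distance.} Let $c,c'\in C$ with $c\neq c'$. By additivity of $\psi$ and the weight-preserving part of the preceding theorem,
\[
d_{H}\bigl(\psi(c),\psi(c')\bigr)=w_{H}\bigl(\psi(c)-\psi(c')\bigr)=w_{H}\bigl(\psi(c-c')\bigr)=w_{L}(c-c')=d_{L}(c,c').
\]
Since $\psi$ is injective, distinct codewords of $C$ correspond bijectively to distinct codewords of $\psi(C)$; taking the minimum over these pairs on both sides gives $d_{H}(\psi(C))=d_{L}(C)=d_{L}$.

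None of the steps is hard: the argument is essentially bookkeeping resting on the preceding isometry theorem. The one point that genuinely does work here, rather than being automatic, is the injectivity of $\psi$, which is what simultaneously forces $|\psi(C)|=|C|$ and guarantees that the minimum defining $d_{H}(\psi(C))$ ranges over exactly the images of the pairs defining $d_{L}(C)$. (If one prefers not to cite the preceding theorem, the identity $w_{H}(\psi(x))=w_{L}(x)$ can instead be read off directly by comparing the displayed table of Lee weights with the values of $\psi$ on the eight elements of $R$, and then summed over coordinates.)
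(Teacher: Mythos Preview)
Your proof is correct and follows essentially the same approach as the paper: you verify $F_{2}$-linearity of $\psi$ (the paper checks additivity and scalar multiplication explicitly), use bijectivity of $\psi$ to get $|\psi(C)|=|C|$, and invoke the preceding isometry theorem for $d_{H}=d_{L}$. Your treatment of injectivity and the minimum-distance step is somewhat more explicit than the paper's, but the argument is the same.
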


\begin{proof}
Let $c=a_{1}+vb_{1}+v^{2}c_{1},$ $\acute{c}=a_{2}+vb_{2}+v^{2}c_{2}\in
R,\alpha \in F_{2\text{ }}$ then

$\psi \left( c+\acute{c}\right) =\psi \left( a_{1}+a_{2}+v\left(
b_{1}+b_{2}\right) +v^{2}\left( c_{1}+c_{2}\right) \right) $

\ \ \ \ \ \ \ \ \ \ \ \ \ \ \ \ \ $=\left(
a_{1}+a_{2},b_{1}+b_{2},a_{1}+a_{2}+c_{1}+c_{2}\right) $

\ \ \ \ \ \ \ \ \ \ \ \ \ \ \ \ \ $=\left( a_{1},b_{1},a_{1}+c_{1}\right)
+\left( a_{2},b_{2},a_{2}+c_{2}\right) $

\ \ \ \ \ \ \ \ \ \ \ \ \ \ \ \ \ $=\psi \left( c\right) +\psi \left( \acute{%
c}\right) $

$\ \ \ \ \psi \left( \alpha c\right) =\psi \left( \alpha a_{1}+v\alpha
b_{1}+v^{2}\alpha c_{1}\right) $

\ \ \ \ \ \ \ \ \ \ \ \ \ \ \ \ $=\left( \alpha a_{1},\alpha b_{1},\alpha
a_{1}+\alpha c_{1}\right) $

\ \ \ \ \ \ \ \ \ \ \ \ \ \ \ \ $=\alpha \left(
a_{1},b_{1},a_{1}+c_{1}\right) $

\ \ \ \ \ \ \ \ \ \ \ \ \ \ \ \ $=\alpha \psi \left( c\right) $

so $\psi $ is\textit{\ }linear\textit{. }As\textit{\ }$\psi $ is bijective
then $\left\vert C\right\vert =\left\vert \psi \left( C\right) \right\vert $%
. From theorem $1$ we have $d_{H}=d_{L}.$
\end{proof}

\begin{theorem}
If $C$ is self orthogonal, so is $\psi \left( C\right) .$
\end{theorem}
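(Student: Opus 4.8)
The plan is to show that the extended Gray map $\psi : R^{n}\rightarrow F_{2}^{3n}$ carries the $R$-inner product onto the $F_{2}$-inner product in a controlled way; precisely, I want to prove the identity
\begin{equation*}
\psi (x)\cdot \psi (y)=\text{(coefficient of }v^{2}\text{ in }x\cdot y\in R)\qquad \text{for all }x,y\in R^{n}.
\end{equation*}
Granting this, the assertion follows at once: if $C$ is self orthogonal then $x\cdot y=0$ in $R$ for every $x,y\in C$, so in particular the $v^{2}$-component of $x\cdot y$ is $0$, whence $\psi (x)\cdot \psi (y)=0$; and since $\psi (C)$ is a linear code over $F_{2}$ by Theorem $2$, this says exactly that $\psi (C)\subseteq \psi (C)^{\perp }$.

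To prove the identity I would reduce to a single coordinate. The $F_{2}$-inner product decomposes as $\psi (x)\cdot \psi (y)=\sum_{i=0}^{n-1}\psi (x_{i})\cdot \psi (y_{i})$, and since $x\cdot y=\sum_{i=0}^{n-1}x_{i}y_{i}$ while extraction of the $v^{2}$-coefficient is an $F_{2}$-linear functional on $R$, it suffices to check for single elements $\alpha =a+vb+v^{2}c$ and $\beta =a^{\prime }+vb^{\prime }+v^{2}c^{\prime }$ of $R$ that $\psi (\alpha )\cdot \psi (\beta )$ equals the $v^{2}$-coefficient of $\alpha \beta $. On one side, $\psi (\alpha )\cdot \psi (\beta )=(a,b,a+c)\cdot (a^{\prime },b^{\prime },a^{\prime }+c^{\prime })=bb^{\prime }+ac^{\prime }+ca^{\prime }+cc^{\prime }$ in $F_{2}$, the term $2aa^{\prime }$ dropping out. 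On the other side, expanding $\alpha \beta $ and using the defining relation $v^{3}=v$ (hence $v^{4}=v^{2}$) to reduce powers gives $\alpha \beta =aa^{\prime }+v(ab^{\prime }+ba^{\prime }+bc^{\prime }+cb^{\prime })+v^{2}(ac^{\prime }+bb^{\prime }+ca^{\prime }+cc^{\prime })$, whose $v^{2}$-coefficient is $ac^{\prime }+bb^{\prime }+ca^{\prime }+cc^{\prime }$. The two expressions agree, which completes the argument.

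The computation is routine, so there is no real obstacle here; the only points that need attention are the multiplication table of $R$ under $v^{3}=v$ and, conceptually, the observation that it is specifically the $v^{2}$-component of the ring product $x\cdot y$ that matches $\psi (x)\cdot \psi (y)$ — neither the constant term nor the $v$-term of $x\cdot y$ does so in general. This is precisely why the Gray map was defined with third coordinate $a+c$ rather than $c$: the shear $(a,b,c)\mapsto (a,b,a+c)$ is exactly what makes $\psi (x)\cdot \psi (y)$ read off a single $F_{2}$-component of the product, and hence makes $\psi $ send self orthogonal codes to self orthogonal codes. I would also remark that the same computation shows $\psi (C)^{\perp }$ contains $\psi (C)$ with the expected dimension count, so if $C$ is self dual of even length the image is self dual as well.
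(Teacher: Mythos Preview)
Your argument is correct and is essentially the same computation as the paper's: both expand $\psi(\alpha)\cdot\psi(\beta)$ and the product $\alpha\beta$ in $R$ coordinatewise and observe that the former equals $b b'+a c'+c a'+c c'$, which vanishes when $\alpha\beta=0$. Your framing is slightly sharper in that you isolate the identity $\psi(x)\cdot\psi(y)=(\text{coefficient of }v^{2}\text{ in }x\cdot y)$ and note explicitly the reduction to a single coordinate; the paper simply writes out the coordinate computation and invokes all three vanishing conditions $a_{1}a_{2}=0$, etc., even though (as your identity makes clear) only the $v^{2}$-component is actually needed, the $a a'$ terms cancelling in characteristic~$2$ regardless.
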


\begin{proof}
Let $c=a_{1}+vb_{1}+v^{2}c_{1},$ $\acute{c}=a_{2}+vb_{2}+v^{2}c_{2}$ where $%
a_{1},b_{1},c_{1},$ $a_{2},b_{2},c_{2}\in F_{2}$.

$c.\acute{c}=a_{1}a_{2}+v\left(
a_{1}b_{2}+b_{1}a_{2}+b_{1}c_{2}+c_{1}b_{2}\right) +v^{2}\left(
a_{1}c_{2}+b_{1}b_{2}+c_{1}a_{2}+c_{1}c_{2}\right) $, if$\ C$ is self
orthogonal,so we have $%
a_{1}a_{2}=0,a_{1}b_{2}+b_{1}a_{2}+b_{1}c_{2}+c_{1}b_{2}=0,a_{1}c_{2}+b_{1}b_{2}+c_{1}a_{2}+c_{1}c_{2}=0 
$. From

$\psi \left( c\right) .\psi \left( \acute{c}\right) =\left(
a_{1},b_{1},a_{1}+c_{1}\right) \left( a_{2},b_{2},a_{2}+c_{2}\right) $

\ \ \ \ \ \ \ \ \ \ \ \ \ \ \ \ \ \ \ \ \ $%
=a_{1}a_{2}+b_{1}b_{2}+a_{1}a_{2}+a_{1}c_{2}+c_{1}a_{2}+c_{1}c_{2}=0$

Therefore, we have $\psi \left( C\right) $ is self orthogonal.
\end{proof}

\begin{proposition}
Let $\psi $ the Gray map from $R^{n}$ to $F_{2}^{3n}$ as follows%
\begin{equation*}
\psi \left( r_{0},r_{1},...,r_{n-1}\right) =\left(
a_{0},...,a_{n-1},b_{0},...,b_{n-1},a_{0}+c_{0},...,a_{n-1}+c_{n-1}\right)
\end{equation*}%
where $r_{i}=a_{i}+vb_{i}+v^{2}c_{i}$ for $i=0,1,...,n-1.$ Let $\sigma $ be
cyclic shift and let $\varphi $ be a map from $F_{2}^{3n}$ to $F_{2}^{3n}$
as in the preliminaries. Then $\psi \sigma =\varphi \psi .$
\end{proposition}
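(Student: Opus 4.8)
The plan is to prove the identity $\psi \sigma = \varphi \psi$ by direct computation: evaluate both composite maps on an arbitrary word $r = \left( r_{0}, r_{1}, \ldots, r_{n-1}\right) \in R^{n}$, where each coordinate is written as $r_{i} = a_{i} + v b_{i} + v^{2} c_{i}$ with $a_{i}, b_{i}, c_{i} \in F_{2}$, and then check that the two resulting elements of $F_{2}^{3n}$ agree block by block. Since both $\sigma$ (on $R^{n}$) and $\varphi$ (on $F_{2}^{3n}$) are $F_{2}$-linear, it would in fact suffice to test the identity on a generating set, but the fully explicit computation is just as short.

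First I would compute the left-hand side. Applying the cyclic shift gives $\sigma(r) = \left( r_{n-1}, r_{0}, \ldots, r_{n-2}\right)$, whose $a$-, $b$-, and $c$-components are the cyclic shifts $\left( a_{n-1}, a_{0}, \ldots, a_{n-2}\right)$, $\left( b_{n-1}, b_{0}, \ldots, b_{n-2}\right)$, $\left( c_{n-1}, c_{0}, \ldots, c_{n-2}\right)$ of the corresponding components of $r$. Feeding this into $\psi$ and using the defining formula of the Gray map, $\psi \sigma (r)$ is the concatenation of the three length-$n$ blocks $\left( a_{n-1}, a_{0}, \ldots, a_{n-2}\right)$, $\left( b_{n-1}, b_{0}, \ldots, b_{n-2}\right)$, and $\left( a_{n-1}+c_{n-1}, a_{0}+c_{0}, \ldots, a_{n-2}+c_{n-2}\right)$.

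Next I would compute the right-hand side. By definition $\psi(r)$ is the concatenation of $\left( a_{0}, \ldots, a_{n-1}\right)$, $\left( b_{0}, \ldots, b_{n-1}\right)$, and $\left( a_{0}+c_{0}, \ldots, a_{n-1}+c_{n-1}\right)$; applying $\varphi$ then applies the length-$n$ cyclic shift $\sigma$ to each of these three blocks separately. The first two blocks clearly become $\left( a_{n-1}, a_{0}, \ldots, a_{n-2}\right)$ and $\left( b_{n-1}, b_{0}, \ldots, b_{n-2}\right)$. The only point requiring a moment's attention — and the nearest thing to an obstacle here — is the third block: one must observe that the cyclic shift of the componentwise sum $\left( a_{i}+c_{i}\right)_{i}$ equals the componentwise sum of the separately shifted strings, i.e. $\sigma\!\left(\left( a_{i}+c_{i}\right)_{i}\right) = \left( a_{i-1}+c_{i-1}\right)_{i}$ (indices mod $n$), which is immediate since $\sigma$ is merely a coordinate permutation and hence additive on $F_{2}^{n}$. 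Thus the third block of $\varphi\psi(r)$ is also $\left( a_{n-1}+c_{n-1}, a_{0}+c_{0}, \ldots, a_{n-2}+c_{n-2}\right)$. Comparing the two computations block by block yields $\psi\sigma(r) = \varphi\psi(r)$, and as $r \in R^{n}$ was arbitrary, $\psi\sigma = \varphi\psi$, which completes the proof. As a remark, this identity is exactly what is needed to conclude that $\psi$ carries cyclic codes over $R$ to quasi-cyclic codes of index $3$ over $F_{2}$.
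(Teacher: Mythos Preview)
Your proof is correct and follows essentially the same approach as the paper: both evaluate $\psi\sigma$ and $\varphi\psi$ on an arbitrary element of $R^{n}$ and compare the three length-$n$ blocks directly. Your additional remark that $\sigma$, being a coordinate permutation, commutes with componentwise addition on the third block is a nice clarification, but otherwise the argument matches the paper's.
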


\begin{proof}
Let $r_{i}=a_{i}+vb_{i}+v^{2}c_{i}$ be the elements of $R$ for $%
i=0,1,....,n-1.$ We have $\sigma \left( r_{0},r_{1},...,r_{n-1}\right)
=\left( r_{n-1},r_{0},...,r_{n-2}\right) .$ If \ we apply $\psi ,$ we have 
\begin{eqnarray*}
\psi \left( \sigma \left( r_{0},...,r_{n-1}\right) \right) &=&\psi
(r_{n-1},r_{0},...,r_{n-2}) \\
&=&(a_{n-1},...,a_{n-2},b_{n-1},...,b_{n-2},a_{n-1}+c_{n-1},...,a_{n-2}+c_{n-2})
\end{eqnarray*}%
On the other hand $\psi
(r_{0},...,r_{n-1})=(a_{0},...,a_{n-1},b_{0},...,b_{n-1},a_{0}+c_{0},...,a_{n-1}+c_{n-1}) 
$. If we apply $\varphi ,$ we have $\varphi \left( \psi \left(
r_{0},r_{1},...,r_{n-1}\right) \right) =(a_{n-1},...,a_{n-2},b_{n-1},...,$ $%
b_{n-2},a_{n-1}+c_{n-1},...,a_{n-2}+c_{n-2})$. Thus, $\psi \sigma =\varphi
\psi .$
\end{proof}

\begin{proposition}
Let $\sigma $ and $\varphi $ be as above. A code $C$ of length $n$ over $R$
is cyclic code if and only if $\psi \left( C\right) $ is quasi cyclic code
of index $3$ and length $3n$ over $F_{2}.$
\end{proposition}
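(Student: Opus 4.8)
The plan is to reduce the whole statement to the functional identity $\psi\sigma=\varphi\psi$ proved in the preceding proposition, combined with the fact (from Theorem $2$) that $\psi\colon R^{n}\rightarrow F_{2}^{3n}$ is a bijection. First I would unwind the two definitions involved: by definition $C$ is cyclic precisely when $\sigma\left(C\right)=C$, and a linear code of length $3n$ over $F_{2}$ is quasi cyclic of index $3$ precisely when $\varphi$ maps it onto itself. Since $C$ is assumed linear over $R$, Theorem $2$ already tells us $\psi\left(C\right)$ is linear over $F_{2}$ of length $3n$, so the phrase ``quasi cyclic code of index $3$'' is being used in its standard sense and only the invariance condition $\varphi\left(\psi\left(C\right)\right)=\psi\left(C\right)$ remains to be matched up with cyclicity of $C$.

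For the forward implication, I would assume $C$ is cyclic, so $\sigma\left(C\right)=C$. Applying $\psi$ setwise and using $\psi\sigma=\varphi\psi$ gives $\varphi\left(\psi\left(C\right)\right)=\psi\left(\sigma\left(C\right)\right)=\psi\left(C\right)$, which is exactly the statement that $\psi\left(C\right)$ is quasi cyclic of index $3$ (and of length $3n$ by the construction of $\psi$).

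For the converse, I would assume $\psi\left(C\right)$ is quasi cyclic of index $3$, i.e. $\varphi\left(\psi\left(C\right)\right)=\psi\left(C\right)$. Then $\psi\left(\sigma\left(C\right)\right)=\varphi\left(\psi\left(C\right)\right)=\psi\left(C\right)$ again by $\psi\sigma=\varphi\psi$. Since $\psi$ is a bijection, the induced map on subsets of $R^{n}$ is injective, so $\psi\left(\sigma\left(C\right)\right)=\psi\left(C\right)$ forces $\sigma\left(C\right)=C$, i.e. $C$ is cyclic.

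The argument is short, and the only step that genuinely needs care — really the crux — is this last cancellation of $\psi$: one must explicitly invoke bijectivity of $\psi$ (from Theorem $2$) to pass from the equality of images $\psi\left(\sigma\left(C\right)\right)=\psi\left(C\right)$ back to $\sigma\left(C\right)=C$. Everything else is a direct substitution into the identity $\psi\sigma=\varphi\psi$, so I would not expect any further obstacle.
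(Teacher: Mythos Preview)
Your proposal is correct and follows essentially the same argument as the paper: both directions are obtained from the identity $\psi\sigma=\varphi\psi$ of the preceding proposition, with the converse using injectivity of $\psi$ to cancel. The only cosmetic difference is that the paper cites injectivity while you cite bijectivity, and you add the remark about linearity of $\psi(C)$; the logical content is the same.
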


\begin{proof}
Suppose $C$ is cyclic code. Then $\sigma \left( C\right) =C.$ If we apply $%
\psi ,$ we have $\psi \left( \sigma \left( C\right) \right) =\psi \left(
C\right) .$ From proposition 4, $\psi \left( \sigma \left( C\right) \right)
=\varphi \left( \psi \left( C\right) \right) =\psi \left( C\right) .$ Hence, 
$\psi \left( C\right) $ is a quasi cyclic code of index $3$. Conversely, if $%
\psi \left( C\right) $ is a quasi cyclic code of index $3$, then $\varphi
(\psi \left( C\right) )=\psi \left( C\right) .$ From proposition 4, we have $%
\varphi \left( \psi \left( C\right) \right) =\psi \left( \sigma \left(
C\right) \right) =\psi \left( C\right) .$ Since $\psi $ is injective, it
follows that $\sigma \left( C\right) =C.$
\end{proof}

We know that any linear code over $R$ of length $n$ is permutation
equivalent to a code with generator matrix of the form

\begin{equation*}
G=\left[ 
\begin{array}{cccccc}
I_{k_{1}} & A_{1} & A_{2} & A_{31}+vA_{32} & A_{41}+vA_{42} & A_{5} \\ 
0 & vI_{k_{2}} & B_{1} & B_{21}+vB_{22} & vB_{31} & B_{4} \\ 
0 & 0 & \left( 1+v\right) I_{k_{3}} & \left( 1+v\right) C_{1} & \left(
1+v\right) C_{2} & C_{3} \\ 
0 & 0 & 0 & \left( v+v^{2}\right) I_{k_{4}} & \left( v+v^{2}\right) D_{1} & D
\\ 
0 & 0 & 0 & 0 & \left( 1+v^{2}\right) I_{k_{5}} & E%
\end{array}%
\right]
\end{equation*}

where $I_{k_{1}},I_{k_{2}},I_{k_{3}},I_{k_{4}},I_{k_{5}}$ are all unit
matrices with order $k_{1},k_{2},k_{3},k_{4},k_{5}$ respectively. Let $%
k=k_{1}+k_{2}+k_{3}+k_{4}+k_{5}$ $\left[ 8\right] .$ $%
A_{5}=A_{51}+vA_{52}+v^{2}A_{53}$ , $B_{4}=vB_{41}+v^{2}B_{42}$, $%
C_{1}=C_{11}+vC_{12}+v^{2}C_{13}$, $C_{3}=\left( 1+v\right) \left(
C_{31}+vC_{32}+v^{2}C_{33}\right) $, $D=\left( v+v^{2}\right) D_{2}$, $%
E=\left( 1+v^{2}\right) E^{^{\prime }}$ where $\
A_{1},A_{2},A_{31,}A_{32},A_{41},A_{42},A_{51},A_{52},A_{53},B_{1}$, $%
B_{21},B_{22},B_{31},B_{41},B_{42},C_{11},C_{12},C_{13},C_{2},C_{31},C_{32},C_{33},D_{1},D_{2},E^{^{\prime }} 
$ are matrices over $F_{2}$ in $\left[ 8\right] .$

Let $C$ be a linear code of length $n$ over $R.$ Define

\begin{eqnarray*}
C_{1} &=&\left\{ a\in F_{2}^{n}:\exists b,c\in F_{2}^{n},a+vb+v^{2}c\in
C\right\} \\
C_{2} &=&\left\{ b\in F_{2}^{n}:\exists a,c\in F_{2}^{n},a+vb+v^{2}c\in
C\right\} \\
C_{3} &=&\left\{ a+c\in F_{2}^{n}:\exists b\in F_{2}^{n},a+vb+v^{2}c\in
C\right\}
\end{eqnarray*}

Obviously, $C_{1},C_{2}$ and $C_{3}$ are binary linear codes.

We denote that $A\otimes B\otimes C=\left\{ \left( a,b,c\right) :a\in A,b\in
B,c\in C\right\} $ and $A\oplus B\oplus C=\left\{ a+b+c:a\in A,b\in B,c\in
C\right\} .$

\begin{theorem}
Let $C$ be a linear code of length $n$ over $R.$ Then $\psi \left( C\right)
=C_{1}\otimes C_{2}\otimes C_{3}$ and $\left\vert C\right\vert =\left\vert
C_{1}\right\vert \left\vert C_{2}\right\vert \left\vert C_{3}\right\vert .$
\end{theorem}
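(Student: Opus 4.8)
My plan is to prove both assertions simultaneously via a sandwich argument: first the easy inclusion $\psi(C)\subseteq C_{1}\otimes C_{2}\otimes C_{3}$, then the cardinality identity $|C|=|C_{1}||C_{2}||C_{3}|$, and finally the remark that an inclusion of finite sets of equal size is an equality.

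For the inclusion I would argue directly from the definitions. Write a codeword $c\in C$ coordinatewise as $c=(r_{0},\dots ,r_{n-1})$ with $r_{i}=a_{i}+vb_{i}+v^{2}c_{i}$, $a_{i},b_{i},c_{i}\in F_{2}$. By the description of $\psi$ on $R^{n}$ recalled above, $\psi(c)=(a_{0},\dots ,a_{n-1},b_{0},\dots ,b_{n-1},a_{0}+c_{0},\dots ,a_{n-1}+c_{n-1})$, and by the very definitions of $C_{1},C_{2},C_{3}$ the three blocks $(a_{0},\dots ,a_{n-1})$, $(b_{0},\dots ,b_{n-1})$, $(a_{0}+c_{0},\dots ,a_{n-1}+c_{n-1})$ lie in $C_{1},C_{2},C_{3}$ respectively. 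Hence $\psi(c)\in C_{1}\otimes C_{2}\otimes C_{3}$, so $\psi(C)\subseteq C_{1}\otimes C_{2}\otimes C_{3}$, a set of size $|C_{1}||C_{2}||C_{3}|$. Since $\psi$ is an injective $F_{2}$-linear map with $|R^{n}|=|F_{2}^{3n}|$ (Theorems 1 and 2), it is a bijection, so already $|C|=|\psi(C)|\leq |C_{1}||C_{2}||C_{3}|$; the content of the theorem is thus the reverse inequality $|C|\geq |C_{1}||C_{2}||C_{3}|$, which in view of the inclusion is the same as the surjectivity of $\psi|_{C}$ onto $C_{1}\otimes C_{2}\otimes C_{3}$.

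For the cardinality I would compute $|C|$ from the standard form of the generator matrix $G$ recalled above, which gives $|C|=8^{k_{1}}4^{k_{2}}4^{k_{3}}2^{k_{4}}2^{k_{5}}$ (each row block contributing the size of the ideal generated by its leading unit $1,\ v,\ 1+v,\ v+v^{2},\ 1+v^{2}$). I would then read off generator matrices for $C_{1}$, $C_{2}$, $C_{3}$ from the rows of $G$ together with their $R$-multiples, using the two identities
\begin{equation*}
v\cdot (a+vb+v^{2}c)=v(a+c)+v^{2}b,\qquad v^{2}\cdot (a+vb+v^{2}c)=vb+v^{2}(a+c),
\end{equation*}
which describe exactly how scaling by $v$ and $v^{2}$ recombines the three binary layers ``constant part'', ``$v$-part'', ``constant $+\ v^{2}$-part'' of a codeword; and finally I would verify that $|C_{1}||C_{2}||C_{3}|$ equals the same power of $2$. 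With this in hand, the inclusion of equal-size finite sets becomes the equality $\psi(C)=C_{1}\otimes C_{2}\otimes C_{3}$, and $|C|=|C_{1}||C_{2}||C_{3}|$ has been shown along the way.

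The step I expect to be the real obstacle is precisely this cardinality bookkeeping. The difficulty is that $C_{1},C_{2},C_{3}$ are not ring-theoretic projections of $C$ — only the reduction $a+vb+v^{2}c\mapsto a$ is a ring homomorphism, while $a+vb+v^{2}c\mapsto b$ and $a+vb+v^{2}c\mapsto a+c$ are merely $F_{2}$-linear — so their generators need not line up cleanly with the block structure of $G$, and one row of $G$ can feed linearly dependent data into two of the three layers at once. One therefore has to track carefully, block by block, which binary vectors actually occur in each $C_{i}$ and whether they are independent; this is where I would slow down and where the heart of the argument lies.
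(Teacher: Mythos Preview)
Your plan diverges from the paper's: the paper argues both inclusions directly, lifting an arbitrary $(a,b,a{+}c)\in C_{1}\otimes C_{2}\otimes C_{3}$ back into $C$ via a combination $d=vx+(1+v)y+(1+v^{2})z$ with suitable $x,y,z\in C$, and then dismisses the cardinality identity as ``easy to verify''. You instead want to pair the easy inclusion with a direct size count from the standard generator matrix and conclude by equal cardinalities.

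Your caution about the counting step is more than justified: that step cannot be completed, because the statement is false for general linear codes over $R$. Take $n=1$ and the ideal $C=\langle 1+v\rangle=\{0,\,1+v,\,1+v^{2},\,v+v^{2}\}$. Reading off the constant, $v$-, and (constant${}+v^{2}$)-parts of these four elements gives $C_{1}=C_{2}=C_{3}=F_{2}$, so $|C_{1}|\,|C_{2}|\,|C_{3}|=8\neq 4=|C|$; concretely $(1,0,1)\in C_{1}\otimes C_{2}\otimes C_{3}$ has $\psi$-preimage $1\notin C$. The paper's own argument breaks at the mirror-image place: the elements $v$, $1+v$, $1+v^{2}$ are \emph{not} orthogonal idempotents of $R$ (for instance $v(1+v)=v+v^{2}\neq 0$ and $v+(1+v)+(1+v^{2})=v^{2}\neq 1$), so the lift $vx+(1+v)y+(1+v^{2})z$ does not in general have image $(a,b,a{+}c)$, and the special shapes $x_{i}=a_{i}+vp_{i}$, $y_{i}=b_{i}+(1+v)q_{i}$, $z_{i}=(a_{i}+c_{i})+(1+v^{2})r_{i}$ assumed there are not guaranteed by the definitions of $C_{1},C_{2},C_{3}$. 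In short, the obstacle you flagged is real and fatal; no amount of generator-matrix bookkeeping will produce the equality $|C|=|C_{1}|\,|C_{2}|\,|C_{3}|$ without an additional hypothesis on $C$.
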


\begin{proof}
For any $%
(a_{0},a_{1},...,a_{n-1},b_{0},b_{1},...,b_{n-1},a_{0}+c_{0},a_{1}+c_{1},...,a_{n-1}+c_{n-1})\in \psi \left( C\right) . 
$ Let $d_{i}=v\left( a_{i}+b_{i}+c_{i}\right) +\left( 1+v\right) \left(
a_{i}+b_{i}\right) +\left( 1+v^{2}\right) c_{i},$ $i=0,1,...,n-1.$ Since\ $%
\psi $ is a bijection $d=\left( d_{0},d_{1},...,d_{n-1}\right) \in C.$ By
definitions of $C_{1},C_{2}$ and $C_{3}$ we have $\left(
a_{0},a_{1},...,a_{n-1}\right) \in C_{1},$ $\left(
b_{0},b_{1},...,b_{n-1}\right) \in C_{2},$ $%
(a_{0}+c_{0},a_{1}+c_{1},...,a_{n-1}+c_{n-1})\in C_{3}$. So, $%
(a_{0},a_{1},...,a_{n-1},b_{0},b_{1},...,b_{n-1},a_{0}+c_{0},a_{1}+c_{1},...,a_{n-1}+c_{n-1})\in C_{1}\otimes C_{2}\otimes C_{3} 
$. That is $\psi \left( C\right) \subseteq C_{1}\otimes C_{2}\otimes C_{3}$.
On the other hand, for any $(a_{0},a_{1},...,a_{n-1},$ $\ \ \ \
b_{0},b_{1},...,b_{n-1},a_{0}+c_{0},a_{1}+c_{1},...,a_{n-1}+c_{n-1})\in
C_{1}\otimes C_{2}\otimes C_{3}$ where $(a_{0},a_{1},...,a_{n-1})\in C_{1},$ 
$(b_{0},b_{1},...,b_{n-1})\in C_{2}$, $%
(a_{0}+c_{0},a_{1}+c_{1},...,a_{n-1}+c_{n-1})\in C_{3}$. There are $x=\left(
a_{0},a_{1},...,a_{n-1}\right) ,$ $y=\left( b_{0},b_{1},...,b_{n-1}\right) ,$
$z=\left( c_{0},c_{1},...,c_{n-1}\right) \in C$ such that $%
x_{i}=a_{i}+vp_{i},$ $y_{i}=b_{i}+\left( 1+v\right) q_{i},$ $z_{i}=\left(
a_{i}+c_{i}\right) +\left( 1+v^{2}\right) r_{i}$ where $p_{i},$ $%
q_{i},r_{i}\in F_{2}$ and $0\leq i\leq n-1.$ Since $C$ is linear we have $%
d=vx+\left( 1+v\right) y+\left( 1+v^{2}\right) z=v\left( a+b+c\right)
+\left( 1+v\right) \left( a+c\right) +\left( 1+v^{2}\right) c\in C.$ It
follows then $\psi \left( d\right) =\left(
a_{0},a_{1},...,a_{n-1},b_{0},b_{1},...,b_{n-1},a_{0}+c_{0},a_{1}+c_{1},...,a_{n-1}+c_{n-1}\right) 
$, which gives $C_{1}\otimes C_{2}\otimes C_{3}\subseteq \psi \left(
C\right) .$ Therefore, $\psi \left( C\right) =C_{1}\otimes C_{2}\otimes
C_{3}.$ The second result is easy to verify.
\end{proof}

\begin{corollary}
If $\psi \left( C\right) =C_{1}\otimes C_{2}\otimes C_{3},$ then $%
C=vC_{1}\oplus \left( 1+v\right) C_{2}\oplus \left( 1+v^{2}\right) C_{3}.$
\end{corollary}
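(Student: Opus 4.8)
The plan is to show that the two codes have the same image under the Gray map $\psi$ and then use that $\psi$ is a bijection. By Theorem 1 the map $\psi\colon R^{n}\rightarrow F_{2}^{3n}$ is a bijection, and by the previous theorem (that $\psi(C)=C_{1}\otimes C_{2}\otimes C_{3}$) this hypothesis is in fact always satisfied. Hence it is enough to verify
\begin{equation*}
\psi\bigl(vC_{1}\oplus(1+v)C_{2}\oplus(1+v^{2})C_{3}\bigr)=C_{1}\otimes C_{2}\otimes C_{3}.
\end{equation*}
Once this is known, $\psi$ carries both $C$ and $vC_{1}\oplus(1+v)C_{2}\oplus(1+v^{2})C_{3}$ onto the same subset of $F_{2}^{3n}$, and injectivity of $\psi$ forces $C=vC_{1}\oplus(1+v)C_{2}\oplus(1+v^{2})C_{3}$.

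To prove that equality I would work coordinate by coordinate. Take binary vectors $x=(x_{0},\dots,x_{n-1})\in C_{1}$, $y=(y_{0},\dots,y_{n-1})\in C_{2}$, $z=(z_{0},\dots,z_{n-1})\in C_{3}$, viewed inside $R^{n}$. The $i$-th coordinate of $vx+(1+v)y+(1+v^{2})z$ is $vx_{i}+(1+v)y_{i}+(1+v^{2})z_{i}$; writing this in the standard form $a_{i}+vb_{i}+v^{2}c_{i}$ with $a_{i},b_{i},c_{i}\in F_{2}$ and then applying $\psi(a+vb+v^{2}c)=(a,b,a+c)$ expresses each of the three length-$n$ blocks of $\psi(vx+(1+v)y+(1+v^{2})z)$ as an $F_{2}$-linear function of $x,y,z$. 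Because $\{v,1+v,1+v^{2}\}$ is an $F_{2}$-basis of $R$, this coordinatewise rule is an $F_{2}$-linear isomorphism, so as $(x,y,z)$ ranges over $C_{1}\times C_{2}\times C_{3}$ the three output blocks range independently over exactly $C_{1}$, $C_{2}$ and $C_{3}$; that is the claimed identity. The inclusion ``$\subseteq$'' is this forward substitution, and ``$\supseteq$'' comes from solving the resulting triangular $3\times3$ linear system for $(x_{i},y_{i},z_{i})$ in terms of a prescribed image coordinate and checking that the solution lies block by block in $C_{1}\times C_{2}\times C_{3}$. As a bonus this exhibits $vC_{1}\oplus(1+v)C_{2}\oplus(1+v^{2})C_{3}$ as $\psi^{-1}$ of a linear code, hence itself linear over $R$, which is consistent with its being $C$.

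The one place that requires care, and the step I expect to be the main obstacle, is the bookkeeping in this coordinatewise Gray-map computation: one must check that after expanding $vx_{i}+(1+v)y_{i}+(1+v^{2})z_{i}$ in the basis $\{1,v,v^{2}\}$ and passing through $\psi$, the three output blocks really do split into the separate variables governing $C_{1}$, $C_{2}$, $C_{3}$ --- matching exactly the way those three binary codes were read off from the coordinates of codewords in their definitions --- rather than mixing, say, $y$ and $z$ in a single block. This identity is the inverse of the one underlying the previous theorem, so the two should be cross-checked for consistency. Apart from this, the argument only uses the $F_{2}$-linearity and bijectivity of $\psi$ from Theorem 1 together with the equality $\psi(C)=C_{1}\otimes C_{2}\otimes C_{3}$ from the previous theorem.
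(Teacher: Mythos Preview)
Your strategy is the natural one and is exactly how such a corollary should be proved: compute $\psi$ on $vC_{1}\oplus(1+v)C_{2}\oplus(1+v^{2})C_{3}$ coordinatewise, compare with $C_{1}\otimes C_{2}\otimes C_{3}=\psi(C)$, and invoke injectivity of $\psi$. The paper offers no proof of this corollary at all, so there is nothing to compare your argument against.

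However, the step you yourself single out as ``the one place that requires care'' does not in fact go through, and this is a genuine obstruction rather than mere bookkeeping. Carrying out the expansion you propose, for $x_{i},y_{i},z_{i}\in F_{2}$ one gets
\[
vx_{i}+(1+v)y_{i}+(1+v^{2})z_{i}=(y_{i}+z_{i})+v(x_{i}+y_{i})+v^{2}z_{i},
\]
so that
\[
\psi\bigl(vx+(1+v)y+(1+v^{2})z\bigr)=\bigl(\,y+z,\;x+y,\;y\,\bigr).
\]
The three output blocks are therefore \emph{not} $x$, $y$, $z$ separately; in particular the third block is $y\in C_{2}$, not an element of $C_{3}$. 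A concrete counterexample at length $n=1$: take $C=\langle v\rangle=\{0,v,v^{2},v+v^{2}\}$. Then $C_{1}=\{0\}$, $C_{2}=C_{3}=F_{2}$, and
\[
vC_{1}\oplus(1+v)C_{2}\oplus(1+v^{2})C_{3}=\{0,\,1+v,\,1+v^{2},\,v+v^{2}\}\neq C.
\]
So the corollary as printed is simply wrong. If instead you invert $\psi$ directly from its definition $\psi(a+vb+v^{2}c)=(a,b,a+c)$, you find $\psi^{-1}(\alpha,\beta,\gamma)=(1+v^{2})\alpha+v\beta+v^{2}\gamma$, and hence the correct statement is
\[
C=(1+v^{2})C_{1}\oplus vC_{2}\oplus v^{2}C_{3},
\]
which your method proves immediately. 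Your instinct that the ``splitting into separate variables'' needed checking was exactly right; the check fails for the coefficients $v,\,1+v,\,1+v^{2}$ used in the paper (which are not orthogonal idempotents in $R$), and succeeds for $1+v^{2},\,v,\,v^{2}$.
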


Moreover,

\begin{eqnarray*}
G_{1} &=&\left[ 
\begin{array}{cccccc}
I_{k_{1}} & A_{1} & A_{3} & A_{31} & A_{41} & A_{51} \\ 
0 & 0 & B_{1} & B_{21} & 0 & 0 \\ 
0 & 0 & I_{k_{3}} & C_{1} & C_{2} & C_{31} \\ 
0 & 0 & 0 & 0 & I_{k_{5}} & E^{^{\prime }}%
\end{array}%
\right] \\
G_{2} &=&\left[ 
\begin{array}{cccccc}
0 & 0 & 0 & A_{32} & A_{42} & A_{52} \\ 
0 & I_{k_{2}} & 0 & B_{22} & B_{31} & B_{41} \\ 
0 & 0 & I_{k_{3}} & C_{11}+C_{12}+C_{13} & C_{2} & C_{31}+C_{32}+C_{33} \\ 
0 & 0 & 0 & I_{k_{4}} & D_{1} & D_{2}%
\end{array}%
\right] \\
G_{3} &=&\left[ 
\begin{array}{cccccc}
I_{k_{1}} & A_{1} & A_{2} & A_{31} & A_{41} & A_{51}+A_{53} \\ 
0 & 0 & B_{1} & B_{21} & 0 & B_{43} \\ 
0 & 0 & I_{k_{3}} & C_{11}+C_{12}+C_{13} & C_{2} & C_{31}+C_{32}+C_{33} \\ 
0 & 0 & 0 & I_{k_{4}} & D_{1} & D_{2}%
\end{array}%
\right]
\end{eqnarray*}

It is easy to see that $\left\vert C\right\vert =\left\vert C_{1}\right\vert
\left\vert C_{2}\right\vert \left\vert C_{3}\right\vert =2^{3\left(
k_{1}+k_{2}+k_{3}\right)
+2k_{4}+k_{5}}=8^{k_{1}+k_{2}+k_{3}}4^{k_{4}}2^{k_{5}}.$

\begin{corollary}
If $G_{1},G_{2},$and $G_{3}$ are generator matrices of binary linear codes $%
C_{1},C_{2}$ and $C_{3}$ respectively, then the generator matrix of $C$ is
\end{corollary}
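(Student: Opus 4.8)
Assuming the asserted conclusion is that
\[
G=\begin{bmatrix} vG_{1}\\ (1+v)G_{2}\\ (1+v^{2})G_{3}\end{bmatrix}
\]
is a generator matrix for $C$ over $R$, i.e.\ that the rows of $G$ generate $C$ as an $R$-module, the plan is to read this off directly from Corollary~7, which gives $C=vC_{1}\oplus(1+v)C_{2}\oplus(1+v^{2})C_{3}$ (unconditionally, since $\psi(C)=C_{1}\otimes C_{2}\otimes C_{3}$ always holds by Theorem~6), together with the hypothesis that $G_{1},G_{2},G_{3}$ span the binary codes $C_{1},C_{2},C_{3}$. No computation with the explicit blocks of the $G_{i}$ is required.

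First I would show that the $R$-span of the rows of $G$ is contained in $C$. A row of the top block has the form $vg$ with $g$ a row of $G_{1}$, hence $g\in C_{1}$ and $vg\in vC_{1}$; and $vC_{1}\subseteq C$ because in the decomposition of Corollary~7 one may take the $C_{2}$- and $C_{3}$-components to be $0$ (using $0\in C_{2}$, $0\in C_{3}$). The same reasoning puts every row of the middle and bottom blocks into $(1+v)C_{2}\subseteq C$ and $(1+v^{2})C_{3}\subseteq C$. Since $C$ is an $R$-submodule of $R^{n}$, every $R$-linear combination of rows of $G$ again lies in $C$.

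For the reverse inclusion, take $c\in C$. By Corollary~7 write $c=vx+(1+v)y+(1+v^{2})z$ with $x\in C_{1}$, $y\in C_{2}$, $z\in C_{3}$. Since $G_{i}$ generates $C_{i}$ over $F_{2}$, there are binary coefficient row vectors $\mathbf{u},\mathbf{w},\mathbf{t}$ with $x=\mathbf{u}G_{1}$, $y=\mathbf{w}G_{2}$, $z=\mathbf{t}G_{3}$; since scalars from $F_{2}\subseteq R$ commute past the ring elements $v$, $1+v$, $1+v^{2}$, this gives
\[
c=\mathbf{u}(vG_{1})+\mathbf{w}\bigl((1+v)G_{2}\bigr)+\mathbf{t}\bigl((1+v^{2})G_{3}\bigr),
\]
an $R$-linear combination of the rows of $G$. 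Hence $C$ lies in the row space of $G$, and with the previous step the two coincide.

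I do not expect a genuine obstacle: the entire content is carried by Corollary~7, and the rest is bookkeeping. The only two points worth a remark are that $G$ is not in row-reduced form and its rows are in general $R$-linearly dependent (the leading coefficients $v$, $1+v$, $1+v^{2}$ are zero divisors), so ``generator matrix'' here must be read as ``spanning family of rows''; and that this is automatically consistent with the cardinality $|C|=|C_{1}||C_{2}||C_{3}|=2^{3(k_{1}+k_{2}+k_{3})+2k_{4}+k_{5}}$ recorded above, since that count was obtained from Theorem~6 independently of the particular matrix chosen to present $C$.
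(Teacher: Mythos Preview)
Your proposal is correct and follows essentially the same approach as the paper: the paper treats this corollary as immediate from Corollary~7 (indeed it gives no argument beyond displaying the matrix and a $\blacksquare$), and your proof simply supplies the details of that implication by showing the $R$-row span of $G$ equals $vC_{1}\oplus(1+v)C_{2}\oplus(1+v^{2})C_{3}=C$. Your remarks on the meaning of ``generator matrix'' over a ring with zero divisors and on consistency with the cardinality formula are apt but go beyond what the paper itself records.
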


$\ \ \ \ \ \ \ \ \ \ \ \ \ \ \ \ \ \ \ \ \ \ \ \ \ \ \ \ \ \ \ \ \ \ \ \ \ \
\ \ \ \ G=\left[ 
\begin{array}{c}
vG_{1} \\ 
\left( 1+v\right) G_{2} \\ 
\left( 1+v^{2}\right) G_{3}%
\end{array}%
\right] $ $\ \ \ \ \ \ \ \ \ \ \ \ \ \ \ \ \ \ \ \ \ \ \ \ \ \ \ \ \ \ \ \ \
\ \ \ \ \ \ \ \ \ \ \ \ \ \ \ \ \ \ \ \ \ \ \ \ \ \ \ \blacksquare \ \ \ \ \
\ \ \ \ \ \ \ \ \ \ \ \ \ \ \ \ \ \ \ \ \ \ \ \ \ \ \ \ \ \ \ \ \ \ \ \ \ \
\ \ \ \ \ $

Let $d_{L}$ minimum Lee weight of linear code $C$ over $R$. Then $d_{L}=\min
\{d_{H}\left( C_{1}\right) ,$ $\ \ \ \ \ \ \ d_{H}\left( C_{2}\right)
,d_{H}\left( C_{3}\right) \}$ where $d_{H}\left( C_{i}\right) $ denotes the
minimum Hamming weights of binary codes $C_{1},C_{2}$ and $C_{3}.$

\section{Quantum Codes From Cyclic Codes Over $R$}

\begin{theorem}
$\left( \text{CSS Construction}\right) $ Let $C$ and $\acute{C}$ be two
binary codes with parameters $\left[ n,k_{1},d_{1}\right] $ and $\left[
n,k_{2},d_{2}\right] $, respectively. If $C^{\perp }\subseteq \acute{C}$,
then an $[[n,k_{1}+k_{2}-n,\min \{d_{1},d_{2}\}]]$ quantum code can be
constructed. Especially, if $C^{\perp }\subseteq C$, then there exists an $%
\left[ \left[ n,2k_{1}-n,d_{1}\right] \right] $ quantum code.
\end{theorem}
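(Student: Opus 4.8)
The plan is to construct the quantum code explicitly as a CSS code via cosets, following the original Calderbank--Shor--Steane argument; Theorem 10 is really a reformulation of that construction. First I would relabel the ingredients: put $C_{1}=\acute{C}$ and $C_{2}=C^{\perp }$, so that the hypothesis $C^{\perp }\subseteq \acute{C}$ becomes the containment $C_{2}\subseteq C_{1}$, with $\dim C_{1}=k_{2}$ and $\dim C_{2}=n-k_{1}$. The quotient $C_{1}/C_{2}$ then has $2^{\,k_{2}-(n-k_{1})}=2^{\,k_{1}+k_{2}-n}$ elements, and this will be the dimension of the code space.

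Next I would write down the code space. For each $x\in C_{1}$ set
\[
|x+C_{2}\rangle =\frac{1}{\sqrt{|C_{2}|}}\sum_{y\in C_{2}}|x+y\rangle \in \left( \mathbb{C}^{2}\right) ^{\otimes n},
\]
and let $Q=\operatorname{span}\{\,|x+C_{2}\rangle :x\in C_{1}\,\}$. Two representatives of the same coset produce the same vector, while representatives of distinct cosets produce vectors supported on disjoint subsets of the computational basis; hence these vectors form an orthonormal basis and $\dim Q=2^{\,k_{1}+k_{2}-n}$, i.e.\ $Q$ encodes $k_{1}+k_{2}-n$ logical qubits into $n$ physical ones.

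Then I would exhibit $Q$ as a stabilizer code and read off its distance. Let $H_{1}$ be a parity-check matrix of $C_{1}$. The operators $Z^{h}$ for rows $h$ of $H_{1}$ fix every $|x+C_{2}\rangle $ because each $x+y\in C_{1}$ satisfies all checks of $C_{1}$; the operators $X^{g}$ for $g\in C_{2}$ merely permute the summation index and also fix every $|x+C_{2}\rangle $. These $Z$-type generators are indexed by $C_{1}^{\perp }$ and the $X$-type generators by $C_{2}$, and a pair $Z^{h},X^{g}$ commutes precisely when $h\cdot g=0$; since $C_{2}\subseteq C_{1}=(C_{1}^{\perp })^{\perp }$ this holds for all pairs of generators, so they generate an abelian stabilizer group whose joint $+1$-eigenspace is exactly $Q$. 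A nontrivial logical operator is then an $X$-type operator supported on a word of $C_{1}\setminus C_{2}$ or, after conjugating by the transversal Hadamard (which interchanges the two roles and replaces the pair $(C_{1},C_{2})$ by $(C_{2}^{\perp },C_{1}^{\perp })$), a $Z$-type operator supported on a word of $C_{2}^{\perp }\setminus C_{1}^{\perp }$. The minimum weight of such a word is at least $\min \{d(C_{1}),d(C_{2}^{\perp })\}=\min \{d_{2},d_{1}\}$, because $C_{2}^{\perp }=(C^{\perp })^{\perp }=C$ has minimum distance $d_{1}$ and $\acute{C}=C_{1}$ has minimum distance $d_{2}$. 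Hence $Q$ is an $[[n,\,k_{1}+k_{2}-n,\,\min \{d_{1},d_{2}\}]]$ quantum code. For the special case, take $\acute{C}=C$: then $k_{2}=k_{1}$, $d_{2}=d_{1}$, the hypothesis is exactly $C^{\perp }\subseteq C$, and the construction yields an $[[n,\,2k_{1}-n,\,d_{1}]]$ quantum code.

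The step I expect to be the main obstacle is the last one: verifying that the two families of generators commute (this is the only place the containment $C^{\perp }\subseteq \acute{C}$ is essential), and then correctly translating ``nontrivial logical operator'' into a low-weight coset representative, including the degenerate situation in which some such representative has weight below $\min \{d_{1},d_{2}\}$ --- there one must check that the representative actually lies in the stabilizer and is therefore trivial, so that it does not lower the distance. By contrast, the dimension count and the orthonormality of the basis states $|x+C_{2}\rangle $ are routine.
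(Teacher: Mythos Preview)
The paper does not actually prove this statement: Theorem~9 is quoted as the well-known CSS construction and is used as a black box, with an implicit reference to the original works of Calderbank, Shor and Steane (references [1], [2], [5] in the paper). So there is no ``paper's own proof'' to compare against.

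That said, your outline is a correct and standard proof of the CSS construction. The relabelling $C_{1}=\acute{C}$, $C_{2}=C^{\perp}$ puts the hypothesis in the usual nested form $C_{2}\subseteq C_{1}$; the coset states $|x+C_{2}\rangle$ are the right basis; the stabilizer description with $Z$-type generators indexed by $C_{1}^{\perp}$ and $X$-type generators indexed by $C_{2}$ is correct, and the commutation check is exactly where $C_{2}\subseteq C_{1}$ is used. Your identification of logical operators with words in $(C_{1}\setminus C_{2})\cup(C_{2}^{\perp}\setminus C_{1}^{\perp})$ and the resulting distance bound $\min\{d(C_{2}^{\perp}),d(C_{1})\}=\min\{d_{1},d_{2}\}$ are also right. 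One small remark: you correctly flag the degenerate case, but note that the theorem as stated only asserts the existence of an $[[n,k_{1}+k_{2}-n,\min\{d_{1},d_{2}\}]]$ code, which in this literature is normally read as ``distance at least $\min\{d_{1},d_{2}\}$''; your argument already gives this lower bound, so the degenerate situation does not threaten the claim --- it can only help.
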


\begin{proposition}
Let $C=vC_{1}\oplus \left( 1+v\right) C_{2}\oplus \left( 1+v^{2}\right)
C_{3} $ be a linear code over $R.$Then $C$ is a cyclic code over $R$ iff $%
C_{1},C_{2}$ and $C_{3}$ are binary cyclic codes.
\end{proposition}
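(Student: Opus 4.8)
The plan is to reduce the statement to the behaviour of the coordinatewise projections of $R^{n}$ onto its $F_{2}$-components under the cyclic shift $\sigma$. First I would record the structural fact that $\{v,\,1+v,\,1+v^{2}\}$ is an $F_{2}$-basis of $R$: checking the eight sums $\varepsilon_{1}v+\varepsilon_{2}(1+v)+\varepsilon_{3}(1+v^{2})$, $\varepsilon_{i}\in F_{2}$, against the list of elements of $R$ in Section 2 shows that they are pairwise distinct, so $R=vF_{2}\oplus(1+v)F_{2}\oplus(1+v^{2})F_{2}$ as an $F_{2}$-space and hence $R^{n}=vF_{2}^{n}\oplus(1+v)F_{2}^{n}\oplus(1+v^{2})F_{2}^{n}$. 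In particular the representation $c=vx+(1+v)y+(1+v^{2})z$ of a vector $c\in R^{n}$ with $x,y,z\in F_{2}^{n}$ is unique, so the triple $(C_{1},C_{2},C_{3})$ in the hypothesis is uniquely determined by $C$, and there are well-defined $F_{2}$-linear projections $\pi_{1},\pi_{2},\pi_{3}\colon R^{n}\to F_{2}^{n}$ picking out, respectively, the $v$-, $(1+v)$-, and $(1+v^{2})$-components, so that $\pi_{1}\bigl(vx+(1+v)y+(1+v^{2})z\bigr)=x$, and likewise for $\pi_{2}$ and $\pi_{3}$. Two easy properties then follow: because $\pi_{j}$ acts coordinatewise in a way that does not depend on the position of the coordinate, $\pi_{j}\circ\sigma=\sigma\circ\pi_{j}$; and applying $\pi_{j}$ to $C=vC_{1}\oplus(1+v)C_{2}\oplus(1+v^{2})C_{3}$ gives $\pi_{1}(C)=C_{1}$, $\pi_{2}(C)=C_{2}$, $\pi_{3}(C)=C_{3}$.

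With this in hand the argument is short. If $C$ is cyclic, then $\sigma(C)=C$, so for each $j$ we get $\sigma(C_{j})=\sigma(\pi_{j}(C))=\pi_{j}(\sigma(C))=\pi_{j}(C)=C_{j}$; since $C_{j}=\pi_{j}(C)$ is the image of a linear code under a linear map it is linear, and being $\sigma$-invariant it is a binary cyclic code. Conversely, suppose $C_{1},C_{2},C_{3}$ are binary cyclic codes, take $c\in C$, and write $c=vx+(1+v)y+(1+v^{2})z$ with $x\in C_{1}$, $y\in C_{2}$, $z\in C_{3}$. Since $\sigma$ merely permutes coordinates it is $R$-linear, so $\sigma(c)=v\,\sigma(x)+(1+v)\,\sigma(y)+(1+v^{2})\,\sigma(z)$; as $\sigma(x)\in C_{1}$, $\sigma(y)\in C_{2}$, $\sigma(z)\in C_{3}$, this lies in $vC_{1}\oplus(1+v)C_{2}\oplus(1+v^{2})C_{3}=C$. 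Hence $\sigma(C)\subseteq C$, and since $\sigma$ is a bijection of the finite set $C$ onto $\sigma(C)$ we conclude $\sigma(C)=C$, i.e. $C$ is cyclic.

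Alternatively — and more in the spirit of the preceding sections — one can route everything through the Gray map: from $C=vC_{1}\oplus(1+v)C_{2}\oplus(1+v^{2})C_{3}$ one gets $\psi(C)=C_{1}\otimes C_{2}\otimes C_{3}$ (this is Theorem 6 combined with Corollary 7), so by Proposition 5 the code $C$ is cyclic iff $\varphi(C_{1}\otimes C_{2}\otimes C_{3})=C_{1}\otimes C_{2}\otimes C_{3}$; since $\varphi$ acts as $\sigma$ on each of the three length-$n$ blocks this reads $\sigma(C_{1})\otimes\sigma(C_{2})\otimes\sigma(C_{3})=C_{1}\otimes C_{2}\otimes C_{3}$, which, by projecting onto the three blocks (each of which contains $0$), is equivalent to $\sigma(C_{i})=C_{i}$ for $i=1,2,3$. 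Either way, the one point that genuinely needs care is the very first step — verifying that $v$, $1+v$, $1+v^{2}$ form an $F_{2}$-basis of $R$, so that the three components really are independent and the projections $\pi_{j}$ (equivalently, the identity $\psi(C)=C_{1}\otimes C_{2}\otimes C_{3}$) are legitimate; once that is settled, everything else is formal.
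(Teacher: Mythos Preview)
Your argument is correct in both variants. The paper, however, states this proposition without proof --- Propositions 10--13 and Lemma 14 are all asserted and left to the reader --- so there is no authorial proof to compare against directly.

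That said, your second route via the Gray map is the one the paper's architecture points to: Proposition 5 and Theorem 6/Corollary 7 are set up precisely so that cyclicity of $C$ translates into $\varphi$-invariance of $C_{1}\otimes C_{2}\otimes C_{3}$, which unwinds to $\sigma$-invariance of each block. Your first route is more self-contained and arguably cleaner, since it avoids the detour through $\psi$ and $\varphi$ and works directly with the $F_{2}$-basis $\{v,\,1+v,\,1+v^{2}\}$ of $R$; the verification that these three elements span $R$ over $F_{2}$ is exactly the right anchor, and once the projections $\pi_{j}$ are in hand the commutation $\pi_{j}\sigma=\sigma\pi_{j}$ makes both directions immediate. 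Either argument would fill the gap the paper leaves.
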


\begin{proposition}
Suppose $C=vC_{1}\oplus \left( 1+v\right) C_{2}\oplus \left( 1+v^{2}\right)
C_{3}$ is a cyclic code of length $n$ over $R$. Then $C=\left\langle
vf_{1},\left( 1+v\right) f_{2},\left( 1+v^{2}\right) f_{3}\right\rangle $
and $\left\vert C\right\vert =2^{3n-(\deg f_{1}+\deg f_{2}+\deg f_{3})}$
where $f_{1},f_{2}$ and $f_{3}$ generator polynomials of $C_{1},C_{2}$ and $%
C_{3}$ respectively.
\end{proposition}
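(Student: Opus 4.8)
The plan is to reduce everything to the standard structure theory of binary cyclic codes, together with the single fact that a cyclic code over $R$ is by definition an ideal of $R[x]/\langle x^{n}-1\rangle$. Since $C$ is cyclic over $R$, the preceding proposition shows that $C_{1},C_{2},C_{3}$ are binary cyclic codes of length $n$; hence each $C_{i}$ is an ideal of $F_{2}[x]/\langle x^{n}-1\rangle$ with a uniquely determined monic generator polynomial $f_{i}\mid x^{n}-1$, so that $C_{i}=\langle f_{i}\rangle$ and $|C_{i}|=2^{\,n-\deg f_{i}}$. Throughout I identify a codeword with its polynomial representative.

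To prove $C=\langle vf_{1},(1+v)f_{2},(1+v^{2})f_{3}\rangle$ I will establish both inclusions. For $\subseteq$: by hypothesis every codeword of $C$ can be written as $v\,c_{1}(x)+(1+v)\,c_{2}(x)+(1+v^{2})\,c_{3}(x)$ with $c_{i}(x)\in C_{i}=\langle f_{i}\rangle$, hence $c_{i}(x)=f_{i}(x)g_{i}(x)$ in $F_{2}[x]/\langle x^{n}-1\rangle$ for suitable $g_{i}$, so the codeword lies in the ideal generated by $vf_{1},(1+v)f_{2},(1+v^{2})f_{3}$. For $\supseteq$: since $f_{i}\in C_{i}$ we have $vf_{1}\in vC_{1}\subseteq C$, $(1+v)f_{2}\in(1+v)C_{2}\subseteq C$, and $(1+v^{2})f_{3}\in(1+v^{2})C_{3}\subseteq C$; because $C$ is an ideal of $R[x]/\langle x^{n}-1\rangle$ it must contain the whole ideal generated by these three elements. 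The two inclusions give the claimed generating set.

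For the cardinality the key point is the $F_{2}$-vector space decomposition $R=vF_{2}\oplus(1+v)F_{2}\oplus(1+v^{2})F_{2}$: each of the three summands has two elements, their sizes multiply to $8=|R|$, and $v\alpha+(1+v)\beta+(1+v^{2})\gamma=0$ forces $\gamma=0$, then $\beta=0$, then $\alpha=0$, so the sum is direct. Reading this coordinatewise shows that $vC_{1}\oplus(1+v)C_{2}\oplus(1+v^{2})C_{3}$ is genuinely direct, hence $|C|=|vC_{1}|\,|(1+v)C_{2}|\,|(1+v^{2})C_{3}|$. Multiplication by each of $v$, $1+v$, $1+v^{2}$ is injective on $F_{2}$ (the images are $\{0,v\}$, $\{0,1+v\}$, $\{0,1+v^{2}\}$), so $|vC_{1}|=|C_{1}|$, $|(1+v)C_{2}|=|C_{2}|$, $|(1+v^{2})C_{3}|=|C_{3}|$, and therefore $|C|=|C_{1}|\,|C_{2}|\,|C_{3}|=2^{3n-(\deg f_{1}+\deg f_{2}+\deg f_{3})}$, in agreement with the count $|C|=|C_{1}|\,|C_{2}|\,|C_{3}|$ obtained earlier.

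I do not expect a genuine obstacle: the argument is purely structural. The two places that need a little care are, first, not confusing the $F_{2}$-module sumset $vC_{1}\oplus(1+v)C_{2}\oplus(1+v^{2})C_{3}$ with the ideal it generates --- in the $\supseteq$ inclusion one must explicitly use that $C$ itself is an ideal --- and, second, the cardinality step, which relies both on the direct-sum decomposition of $R$ above and on the standard fact that a binary cyclic code of length $n$ (even in the repeated-root case, when $2\mid n$) has dimension $n-\deg f_{i}$.
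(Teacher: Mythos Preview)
The paper states this proposition without proof, so there is nothing to compare against directly; your argument is correct and is exactly the natural one, relying on the preceding proposition for the cyclicity of the $C_i$, the standard dimension formula for binary cyclic codes, and the $F_2$-direct-sum decomposition $R=vF_2\oplus(1+v)F_2\oplus(1+v^2)F_2$, which you verify. Your care in distinguishing the sumset $vC_1\oplus(1+v)C_2\oplus(1+v^2)C_3$ (the paper's notation for $\{a+b+c\}$) from the ideal it generates, and in using that $C$ is an ideal for the reverse inclusion, is well placed; the cardinality count also matches the paper's earlier Theorem that $|C|=|C_1|\,|C_2|\,|C_3|$.
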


\begin{proposition}
Suppose $C$ is a cyclic code of length $n$ over $R$, then there is a unique
polynomial $f\left( x\right) $ such that $C=\left\langle f\left( x\right)
\right\rangle $ and $f\left( x\right) \mid x^{n}-1$ where $f\left( x\right)
=vf_{1}\left( x\right) +\left( 1+v\right) f_{2}\left( x\right) +\left(
1+v^{2}\right) f_{3}\left( x\right) .$
\end{proposition}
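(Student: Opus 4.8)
The plan is to assemble the single generator out of the three binary generator polynomials supplied by Proposition 11, and then to verify, one at a time, that it generates $C$, that it divides $x^{n}-1$ in $R[x]$, and that it is unique. It helps to keep in mind that $v^{3}-v=v(v+1)^{2}$, so $R\cong F_{2}\times(F_{2}+uF_{2})$ with $u^{2}=0$, and that under this isomorphism a cyclic code of length $n$ over $R$ corresponds to a pair: a binary cyclic code together with a cyclic code over the chain ring $F_{2}+uF_{2}$.

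By Propositions 10 and 11 we may write $C=\langle vf_{1}(x),(1+v)f_{2}(x),(1+v^{2})f_{3}(x)\rangle$ in $R[x]/\langle x^{n}-1\rangle$, where $f_{i}(x)$ is the monic generator polynomial of the binary cyclic code $C_{i}$, so $f_{i}(x)\mid x^{n}-1$ over $F_{2}$; write $x^{n}-1=f_{i}(x)h_{i}(x)$. Set $f(x)=vf_{1}(x)+(1+v)f_{2}(x)+(1+v^{2})f_{3}(x)$. Since $f(x)$ is an $R[x]$-combination of the three listed generators, $f(x)\in C$, so $\langle f(x)\rangle\subseteq C$. For the reverse inclusion I would recover each of $vf_{1}(x)$, $(1+v)f_{2}(x)$, $(1+v^{2})f_{3}(x)$ as an $R[x]$-multiple of $f(x)$: multiplying $f(x)$ by a suitable element of $R$ and using the multiplication table of $\{v,1+v,1+v^{2}\}$, in particular $v(1+v^{2})=v+v^{3}=v+v=0$, $(1+v)^{2}=1+v^{2}$, $(1+v^{2})^{2}=1+v^{2}$ and $(1+v)(1+v^{2})=1+v^{2}$, removes one or two of the components, and then multiplying by the cofactors $h_{j}(x)$ annihilates the surviving unwanted terms because $f_{j}(x)h_{j}(x)\equiv 0$ in $R[x]/\langle x^{n}-1\rangle$. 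This gives $C\subseteq\langle f(x)\rangle$, hence $C=\langle f(x)\rangle$, and $|C|=2^{3n-(\deg f_{1}+\deg f_{2}+\deg f_{3})}$ is then inherited from Proposition 11.

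To show $f(x)\mid x^{n}-1$ in $R[x]$ one wants a cofactor $h(x)\in R[x]$ with $f(x)h(x)=x^{n}-1$. The naive choice $h=vh_{1}+(1+v)h_{2}+(1+v^{2})h_{3}$ does not work in general: expanding $f(x)h(x)$ with the above table leaves cross terms such as $(v+v^{2})\bigl(f_{1}(x)h_{2}(x)+f_{2}(x)h_{1}(x)\bigr)$ which vanish only when the $f_{i}$ satisfy further divisibility relations among themselves. The fix, which I expect to be the main obstacle of the whole argument, is to descend through $R\cong F_{2}\times(F_{2}+uF_{2})$: in the $F_{2}$-factor the divisor-generator and its cofactor are the classical ones, and in the $F_{2}+uF_{2}$-factor they are supplied by the known structure of cyclic codes over $F_{2}+uF_{2}$ (cf. $[4]$); gluing the two by the Chinese Remainder Theorem produces $f(x)$ together with a cofactor, and rewriting the glued generator in the $F_{2}$-basis $\{v,1+v,1+v^{2}\}$ of $R$ returns exactly the form $f(x)=vf_{1}(x)+(1+v)f_{2}(x)+(1+v^{2})f_{3}(x)$, with the $f_{i}$ automatically satisfying whatever compatibility was needed above. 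Verifying that the degrees add up and that no unit ambiguity is lost in this translation is where the care is concentrated.

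Finally, for uniqueness, the binary cyclic codes $C_{1},C_{2},C_{3}$, and hence their monic generator polynomials $f_{1},f_{2},f_{3}$, are determined by $C$ alone by Theorem 6, Corollary 7 and Proposition 10, and $f(x)$ is built from precisely these. Thus if $g(x)\mid x^{n}-1$ with $\langle g(x)\rangle=C$ is written as $vg_{1}(x)+(1+v)g_{2}(x)+(1+v^{2})g_{3}(x)$, the same analysis applied to $g$ forces $\langle g_{i}(x)\rangle=C_{i}$, so by uniqueness of the monic generator polynomial of a binary cyclic code $g_{i}=f_{i}$ for each $i$, and therefore $g(x)=f(x)$. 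Assembling the three parts proves the proposition.
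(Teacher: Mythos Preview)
The paper states this proposition without any proof, so there is nothing to compare against; what matters is whether your argument stands on its own. It does not, for two related reasons.

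First, the reverse inclusion $C\subseteq\langle f(x)\rangle$ is only asserted. The triple $v,\,1+v,\,1+v^{2}$ is \emph{not} a system of orthogonal idempotents of $R$ (only $1+v^{2}$ is idempotent, and for instance $v^{2}(1+v)=v+v^{2}\neq 0$), so multiplying $f$ by a ring element never isolates a single component. Concretely, $v^{2}f=vf_{1}+(v+v^{2})f_{2}$ and $(1+v^{2})f=(1+v^{2})(f_{2}+f_{3})$; throwing in the cofactors $h_{j}$ kills one summand but leaves you with things like $vh_{2}f_{1}$ rather than $vf_{1}$, and there is no reason in general that $vf_{1}$ lies in the ideal generated by $vh_{2}f_{1}$ and $(v+v^{2})f_{1}$. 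So the sentence ``multiplying \ldots\ removes one or two of the components, and then multiplying by the cofactors $h_{j}(x)$ annihilates the surviving unwanted terms'' is precisely the step that needs an argument and does not receive one.

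Second, you yourself flag that the naive cofactor $h=vh_{1}+(1+v)h_{2}+(1+v^{2})h_{3}$ fails to give $fh=x^{n}-1$ and propose to repair this by passing to $R\cong F_{2}\times(F_{2}+uF_{2})$ and invoking ``the known structure of cyclic codes over $F_{2}+uF_{2}$''. That is a plan, not a proof: you neither exhibit the generator in the $F_{2}+uF_{2}$ factor, nor check that after pulling back through the CRT isomorphism it has the advertised shape $vf_{1}+(1+v)f_{2}+(1+v^{2})f_{3}$ with the very $f_{i}$ coming from the paper's definitions of $C_{1},C_{2},C_{3}$. Indeed, cyclic codes over $F_{2}+uF_{2}$ are in general generated by $g(x)+ua(x)$ with $g\mid x^{n}-1$ and a second polynomial constrained by $g$; translating that back forces relations among $f_{1},f_{2},f_{3}$ that you never verify are automatically satisfied by the projection codes $C_{1},C_{2},C_{3}$. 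The uniqueness claim has the same weakness: writing an arbitrary generator $g$ of $C$ in the basis $\{v,1+v,1+v^{2}\}$ and asserting ``the same analysis forces $\langle g_{i}\rangle=C_{i}$'' presupposes exactly the separation argument that is missing above.

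In short, you have correctly located the obstruction (the decomposition in the paper is not by orthogonal idempotents), but the two places where real work is required---recovering the three generators from $f$, and producing a genuine cofactor in $R[x]$---are left as promissory notes.
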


\begin{proposition}
If \ $C=vC_{1}\oplus \left( 1+v\right) C_{2}\oplus \left( 1+v^{2}\right)
C_{3}$ is a cyclic code of length $n$ over $R$, then $C^{\perp
}=\left\langle vh_{1}^{\ast }+\left( 1+v\right) h_{2}^{\ast }+\left(
1+v^{2}\right) h_{3}^{\ast }\right\rangle $ and $\left\vert C^{\perp
}\right\vert =2^{\deg f_{1}+\deg f_{2}+\deg f_{3}}$ where for $i=1,2,3$, $%
h_{i}^{\ast }$ are the reciprocal polynomials of $h_{i}$ i.e., $h_{i}\left(
x\right) =\left( x^{n}-1\right) /f_{i}\left( x\right) ,$ $h_{i}^{\ast
}\left( x\right) =x^{\deg h_{i}}h_{i}\left( x^{-1}\right) $ for $i=1,2,3$.
\end{proposition}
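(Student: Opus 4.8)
The plan is to push the problem down to the three binary cyclic codes attached to $C$, use the classical duality theory of binary cyclic codes, and reassemble. Since $C=vC_{1}\oplus (1+v)C_{2}\oplus (1+v^{2})C_{3}$ is cyclic, Proposition 11 shows each $C_{i}$ is a binary cyclic code of length $n$; writing $C_{i}=\langle f_{i}\rangle$ with $f_{i}\mid x^{n}-1$ and $h_{i}=(x^{n}-1)/f_{i}$, the classical theory gives $C_{i}^{\perp}=\langle h_{i}^{\ast}\rangle$, that $h_{i}^{\ast}\mid x^{n}-1$ (because $x^{n}-1$ is, up to sign, self-reciprocal, so $h_{i}^{\ast}$ is again the generator polynomial of $C_{i}^{\perp}$), and that $|C_{i}^{\perp}|=2^{\deg f_{i}}$. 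Also $C^{\perp}$ is again cyclic over $R$: if $c\in C^{\perp}$ and $c'\in C$ then $\sigma (c)\cdot c'=c\cdot \sigma ^{-1}(c')=0$ because $\sigma ^{-1}(c')\in C$, so $\sigma (C^{\perp})\subseteq C^{\perp}$. Hence by Propositions 11 and 13, $C^{\perp}=vD_{1}\oplus (1+v)D_{2}\oplus (1+v^{2})D_{3}$ for binary cyclic codes $D_{i}=\langle g_{i}\rangle$, with $C^{\perp}=\langle vg_{1}+(1+v)g_{2}+(1+v^{2})g_{3}\rangle$. The whole task reduces to proving $D_{i}=C_{i}^{\perp}$, i.e. $g_{i}=h_{i}^{\ast}$.

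First I would establish the inclusion $vC_{1}^{\perp}\oplus (1+v)C_{2}^{\perp}\oplus (1+v^{2})C_{3}^{\perp}\subseteq C^{\perp}$. Take $x=vx_{1}+(1+v)x_{2}+(1+v^{2})x_{3}\in C$ with $x_{i}\in C_{i}$, and $y=vy_{1}+(1+v)y_{2}+(1+v^{2})y_{3}$ with $y_{i}\in C_{i}^{\perp}$. Expanding the $R$-inner product $x\cdot y$ coordinatewise and collecting terms via the multiplication relations in $R$ (e.g. $v\cdot v=v^{2}$, $v\cdot v^{2}=v$, $(1+v)^{2}=1+v^{2}$, $v(1+v^{2})=0$) writes $x\cdot y$ as an $\mathbb{F}_{2}$-combination of the scalar products $x_{i}\cdot y_{j}$; the diagonal terms vanish because $x_{i}\in C_{i}$, $y_{i}\in C_{i}^{\perp}$, which is what is needed to kill $x\cdot y$.

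For equality I would close with a counting argument. By Proposition 12, $|C|=2^{\,3n-(\deg f_{1}+\deg f_{2}+\deg f_{3})}$, while the standard duality identity for codes over a finite (Frobenius) ring gives $|C|\,|C^{\perp}|=|R|^{n}=2^{3n}$, so $|C^{\perp}|=2^{\deg f_{1}+\deg f_{2}+\deg f_{3}}$. On the other hand, since $\{v,1+v,1+v^{2}\}$ is an $\mathbb{F}_{2}$-basis of $R$, the left side of the desired identity is a genuine direct sum and (as in Theorem 6) has exactly $|C_{1}^{\perp}|\,|C_{2}^{\perp}|\,|C_{3}^{\perp}|=2^{\deg f_{1}+\deg f_{2}+\deg f_{3}}$ elements. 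Combined with the inclusion above, this forces $C^{\perp}=vC_{1}^{\perp}\oplus (1+v)C_{2}^{\perp}\oplus (1+v^{2})C_{3}^{\perp}$, hence $D_{i}=C_{i}^{\perp}=\langle h_{i}^{\ast}\rangle$; Proposition 13 applied to $C^{\perp}$ then identifies its generator as $vh_{1}^{\ast}+(1+v)h_{2}^{\ast}+(1+v^{2})h_{3}^{\ast}$, which also yields the cardinality claim.

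The step I expect to be the main obstacle is the component-wise description of the dual in the second paragraph: because $v,1+v,1+v^{2}$ are not a system of orthogonal idempotents, the cross terms $x_{i}\cdot y_{j}$ with $i\neq j$ in the expansion of $x\cdot y$ are not manifestly zero, so one must either exploit the precise relations among these elements of $R$ (and the constraints that $R$-linearity of $C$ places on $C_{1},C_{2},C_{3}$) or, more safely, settle for the inclusion and let the counting argument of the third paragraph upgrade it to equality. A secondary point requiring a line of justification is the identity $|C|\,|C^{\perp}|=|R|^{n}$, which holds because $R$, being a finite principal ideal ring, is a product of chain rings and hence a Frobenius ring.
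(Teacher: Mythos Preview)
The paper states this proposition without proof, so there is nothing to compare against; I will assess your argument on its own merits.

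Your overall plan---reduce to the binary constituents $C_i$, invoke the classical description $C_i^{\perp}=\langle h_i^{\ast}\rangle$, and reassemble using the single–generator form of Proposition~12---is exactly the right shape, and the counting step $|C|\,|C^{\perp}|=|R|^{n}$ together with Proposition~12 correctly gives $|C^{\perp}|=2^{\deg f_1+\deg f_2+\deg f_3}$.

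There is, however, a genuine gap in the orthogonality step. You yourself flag that $v,\,1+v,\,1+v^{2}$ are \emph{not} orthogonal idempotents, so for $x=vx_{1}+(1+v)x_{2}+(1+v^{2})x_{3}\in C$ and $y=vy_{1}+(1+v)y_{2}+(1+v^{2})y_{3}$ with $y_i\in C_i^{\perp}$ the product $x\cdot y$ contains the cross terms
\[
(v+v^{2})\bigl(x_{1}\!\cdot\! y_{2}+x_{2}\!\cdot\! y_{1}\bigr)+(1+v^{2})\bigl(x_{2}\!\cdot\! y_{3}+x_{3}\!\cdot\! y_{2}\bigr),
\]
which do not vanish from $x_i\cdot y_i=0$ alone. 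Your proposed fallback---``settle for the inclusion and let counting upgrade it''---does not help here, because it is precisely the inclusion $vC_{1}^{\perp}\oplus(1+v)C_{2}^{\perp}\oplus(1+v^{2})C_{3}^{\perp}\subseteq C^{\perp}$ that is blocked by those cross terms; counting cannot rescue an inclusion you have not established.

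The repair is to use, as you hint, the constraints that $R$-linearity imposes on the $C_i$. Closure of $C$ under multiplication by $v$ (sending $a+vb+v^{2}c$ to $v(a+c)+v^{2}b$) forces $C_{2}=C_{3}$; this already kills the $(1+v^{2})$ cross term. A cleaner route avoiding the awkward parametrisation is to compute $C^{\perp}$ directly in the $(a,b,c)$ coordinates: writing $x=a+vb+v^{2}c$ and letting $y=a'+vb'+v^{2}c'$ range over $C$ (so $a'\in C_{1}$, $b'\in C_{2}$, $a'+c'\in C_{3}$ independently, by Theorem~6), the three $F_{2}$-components of $x\cdot y$ give exactly $a\in C_{1}^{\perp}$, $b\in C_{2}^{\perp}\cap C_{3}^{\perp}$, $a+c\in C_{2}^{\perp}\cap C_{3}^{\perp}$. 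With $C_{2}=C_{3}$ this says $\psi(C^{\perp})=C_{1}^{\perp}\times C_{2}^{\perp}\times C_{3}^{\perp}$, and Corollary~7 and Proposition~12 then yield the stated generator and cardinality.
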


\begin{lemma}
A binary linear cyclic code $C$ with generator polynomial $f\left( x\right) $
contains its dual code iff
\end{lemma}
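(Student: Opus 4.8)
The plan is to reduce the statement to the classical description of the dual of a binary cyclic code in terms of the reciprocal of the check polynomial, and then translate the containment $C^{\perp}\subseteq C$ into a divisibility relation among $f(x)$, its reciprocal $f^{\ast}(x)$, and $x^{n}-1$.

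First I would recall that if $C=\left\langle f(x)\right\rangle$ is a binary cyclic code of length $n$ with $f(x)\mid x^{n}-1$, then its check polynomial $h(x)=(x^{n}-1)/f(x)$ determines the dual via $C^{\perp}=\left\langle h^{\ast}(x)\right\rangle$, where $h^{\ast}(x)=x^{\deg h}h(x^{-1})$; this is the $F_{2}$-analogue of the description of $C^{\perp}$ proved above for codes over $R$, and over $F_{2}$ no normalization is needed since $h(0)=1$ (because $h\mid x^{n}-1$). Next, since $C$ is the ideal generated by $f(x)$ in $F_{2}[x]/\left\langle x^{n}-1\right\rangle$, a cyclic code $\left\langle a(x)\right\rangle$ is contained in $C$ exactly when $f(x)\mid a(x)$: indeed $\left\langle a(x)\right\rangle\subseteq\left\langle f(x)\right\rangle$ iff $f(x)\mid\gcd(a(x),x^{n}-1)$, and since $f(x)\mid x^{n}-1$ already this is equivalent to $f(x)\mid a(x)$. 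Applying this with $a(x)=h^{\ast}(x)$ yields the first equivalence: $C^{\perp}\subseteq C$ iff $f(x)\mid h^{\ast}(x)$.

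Then I would use the elementary identity for reciprocals: from $f(x)h(x)=x^{n}-1$ one obtains $f^{\ast}(x)h^{\ast}(x)=x^{n}-1$ over $F_{2}$, because the reciprocal of a product is the product of the reciprocals (the degrees add correctly, as $\deg f+\deg h=n$), and the reciprocal of $x^{n}-1$ is again $x^{n}-1$ in characteristic $2$. Hence $h^{\ast}(x)=(x^{n}-1)/f^{\ast}(x)$, and the condition $f(x)\mid h^{\ast}(x)$ becomes $f(x)f^{\ast}(x)\mid x^{n}-1$, equivalently $x^{n}-1\equiv 0\pmod{f(x)f^{\ast}(x)}$. This is the asserted "iff" condition of the lemma.

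The computations here are all routine; the only points requiring care are (i) invoking the correct (unnormalized, over $F_{2}$) reciprocal polynomial so that $C^{\perp}=\left\langle h^{\ast}\right\rangle$ holds exactly, and (ii) checking that the reciprocal-of-a-product identity is applied to polynomials whose degrees sum to $n$, so that no spurious factor of $x$ is introduced. Once these are in place, the chain $C^{\perp}\subseteq C\iff f\mid h^{\ast}\iff ff^{\ast}\mid x^{n}-1$ closes immediately, and this will be exactly what is needed to feed the CSS construction (Theorem on CSS above) in the next results.
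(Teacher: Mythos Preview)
Your argument is correct: the chain $C^{\perp}\subseteq C\iff f\mid h^{\ast}\iff ff^{\ast}\mid x^{n}-1$ is exactly the standard route, and your care about the degree bookkeeping in the reciprocal-of-a-product identity is appropriate. Note, however, that the paper does not actually prove this lemma; it is quoted as a well-known fact from the classical theory of binary cyclic codes and used as a black box in the proof of the subsequent theorem, so there is no ``paper's own proof'' to compare against here.
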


\begin{equation*}
x^{n}-1\equiv 0\left( \func{mod}ff^{\ast }\right)
\end{equation*}

where $f^{\ast }$ is the reciprocal polynomial of $f$.

\begin{theorem}
Let $C=\left\langle vf_{1}+\left( 1+v\right) f_{2}+\left( 1+v^{2}\right)
f_{3}\right\rangle $ be a cyclic code of length $n$ over $R$. Then $C^{\perp
}\subseteq C$ iff $x^{n}-1\equiv 0\left( \func{mod}f_{i}f_{i}^{\ast }\right) 
$ for $i=1,2,3.$
\end{theorem}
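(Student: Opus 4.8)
The plan is to reduce the containment $C^{\perp}\subseteq C$ over $R$ to three separate binary containment statements $C_i^{\perp}\subseteq C_i$, and then invoke the preceding Lemma (the criterion $x^n-1\equiv 0\ (\mathrm{mod}\ f_if_i^{\ast})$) for each $i$. The key structural fact that makes this work is the orthogonal decomposition of $R$ furnished by the idempotent-like elements $v$, $1+v$, $1+v^2$: one checks that, modulo $2$, these satisfy $v\cdot(1+v)=v+v^2$, etc., and more to the point that the three ``components'' do not interfere under the inner product. So first I would record explicitly that if $C=vC_1\oplus(1+v)C_2\oplus(1+v^2)C_3$ then, using Proposition~14, $C^{\perp}=vC_1^{\perp}\oplus(1+v)C_2^{\perp}\oplus(1+v^2)C_3^{\perp}$; this identifies $C_i^{\perp}=(C^{\perp})_i$ in the notation of the $C_i$-construction, i.e. the ``$i$-th component'' of the dual code over $R$ is the binary dual of the $i$-th component.

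Granting that, the second step is the purely formal observation that for codes of this split form, $C^{\perp}\subseteq C$ holds if and only if $C_i^{\perp}\subseteq C_i$ for every $i\in\{1,2,3\}$. The forward direction: project $C^{\perp}\subseteq C$ onto each component (multiply by the appropriate element of $R$ and read off the binary coordinate), which yields $C_i^{\perp}\subseteq C_i$. The reverse direction: if each $C_i^{\perp}\subseteq C_i$, then $vC_1^{\perp}\subseteq vC_1$, $(1+v)C_2^{\perp}\subseteq (1+v)C_2$, $(1+v^2)C_3^{\perp}\subseteq(1+v^2)C_3$, and summing gives $C^{\perp}\subseteq C$. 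Then apply the Lemma to each binary cyclic code $C_i=\langle f_i\rangle$: $C_i^{\perp}\subseteq C_i$ iff $x^n-1\equiv 0\ (\mathrm{mod}\ f_if_i^{\ast})$. Combining the three equivalences gives exactly the claimed statement.

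The step I expect to be the main obstacle — or at least the one requiring genuine care rather than bookkeeping — is establishing that $C^{\perp}$ decomposes as $vC_1^{\perp}\oplus(1+v)C_2^{\perp}\oplus(1+v^2)C_3^{\perp}$, i.e. that the dual operation is compatible with the $R$-module decomposition. This amounts to checking the ``orthogonality of components'' for the $R$-valued inner product: one needs that $\big(v x\big)\cdot\big((1+v) y\big)$, $\big(v x\big)\cdot\big((1+v^2) y\big)$, and $\big((1+v) y\big)\cdot\big((1+v^2) z\big)$ all behave correctly — concretely, that the cross terms vanish or land in a predictable place so that $d\cdot d'=0$ in $R$ forces the three binary inner products to vanish separately. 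Using $v^3=v$ and characteristic $2$, the relevant products are $v(1+v)=v+v^2$, $v(1+v^2)=0$, $(1+v)(1+v^2)=v+v^2+1+1=v+v^2$ (and $v^2=v\cdot v$ with $v^2(1+v^2)=v^2+v^2=0$), so the pairing between the $v$- and $(1+v^2)$-parts is identically zero, while the remaining pairings are handled by noting $v^2\equiv v\pmod{\langle v+v^2\rangle}$ and tracking coordinates through the explicit generator matrices $G_1,G_2,G_3$ together with Proposition~14's computation of $|C^{\perp}|=2^{\deg f_1+\deg f_2+\deg f_3}$, which pins down the sizes and so upgrades inclusions to equalities. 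Once this decomposition of $C^{\perp}$ is in hand, the rest is immediate from Propositions~12–14 and the Lemma.
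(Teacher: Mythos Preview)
Your proposal is correct and follows essentially the same route as the paper: reduce $C^{\perp}\subseteq C$ to the three binary containments $C_i^{\perp}\subseteq C_i$ via the direct-sum decomposition of $C$ and $C^{\perp}$, then apply the binary Lemma to each component. The paper's proof is exactly this, using Proposition~13 (your ``Proposition~14'' is off by one) for the description $C^{\perp}=\langle vh_1^{\ast}+(1+v)h_2^{\ast}+(1+v^2)h_3^{\ast}\rangle$, and then reducing $\mathrm{mod}\ v$, $\mathrm{mod}\ (1+v)$, $\mathrm{mod}\ (1+v^2)$ to extract the component containments in the converse direction.

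One small correction in your exploratory third paragraph: $(1+v)(1+v^2)=1+v+v^2+v^3=1+v^2$ (using $v^3=v$ and characteristic~$2$), not $v+v^2$. The elements $v,\ 1+v,\ 1+v^2$ are \emph{not} pairwise orthogonal idempotents in $R$, so the ``orthogonality of components'' verification you sketch would not go through as written. Fortunately you do not need it: the decomposition of $C^{\perp}$ is already supplied by Proposition~13 together with the size count $|C^{\perp}|=2^{\deg f_1+\deg f_2+\deg f_3}$, which is precisely how the paper proceeds. So drop that paragraph and cite Proposition~13 directly, and your argument coincides with the paper's.
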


\begin{proof}
Let $x^{n}-1\equiv 0\left( \func{mod}f_{i}f_{i}^{\ast }\right) $ for $%
i=1,2,3.$ Then $C_{1}^{\perp }\subseteq C_{1},C_{2}^{\perp }\subseteq
C_{2},C_{3}^{\perp }\subseteq C_{3}.$ By using $vC_{1}^{\perp }\subseteq
vC_{1},$ $\left( 1+v\right) C_{2}^{\perp }\subseteq \left( 1+v\right) C_{2},$
$\left( 1+v^{2}\right) C_{3}^{\perp }\subseteq \left( 1+v^{2}\right) C_{3}.$
We have $vC_{1}^{\perp }\oplus \left( 1+v\right) C_{2}^{\perp }\oplus \left(
1+v^{2}\right) C_{3}^{\perp }\subseteq vC_{1}\oplus \left( 1+v\right)
C_{2}\oplus \left( 1+v^{2}\right) C_{3}.$ So, $\left\langle vh_{1}^{\ast
}+\left( 1+v\right) h_{2}^{\ast }+\left( 1+v^{2}\right) h_{3}^{\ast
}\right\rangle \subseteq \left\langle vf_{1},\left( 1+v\right) f_{2},\left(
1+v^{2}\right) f_{3}\right\rangle .$ That is $C^{\perp }\subseteq C$.

Conversely, if $C^{\perp }\subseteq C$, then $vC_{1}^{\perp }\oplus \left(
1+v\right) C_{2}^{\perp }\oplus \left( 1+v^{2}\right) C_{3}^{\perp
}\subseteq vC_{1}\oplus \left( 1+v\right) C_{2}\oplus \left( 1+v^{2}\right)
C_{3}.$ By thinking $\func{mod}v,\func{mod}\left( 1+v\right) ,\func{mod}%
\left( 1+v^{2}\right) $ respectively we have $C_{i}^{\perp }\subseteq C_{i}$
for $i=1,2,3$. Therefore, $x^{n}-1\equiv 0\left( \func{mod}f_{i}f_{i}^{\ast
}\right) $ for $i=1,2,3.$
\end{proof}

\begin{corollary}
$C=vC_{1}\oplus \left( 1+v\right) C_{2}\oplus \left( 1+v^{2}\right) C_{3}$
is a cyclic code of length $n$ over $R$. Then $C^{\perp }\subseteq C$ iff $%
C_{i}^{\perp }\subseteq C_{i}$ for $i=1,2,3$.
\end{corollary}

\begin{example}
Let $\ n=21,R=F_{2}+vF_{2}+v^{2}F_{2}$%
\begin{eqnarray*}
x^{21}-1 &=&(x+1)(x^{2}+x+1)(x^{3}+x^{2}+1)(x^{3}+x+1)(x^{6}+x^{4}+x^{2}+x+1)
\\
&&(x^{6}+x^{5}+x^{4}+x^{2}+1)
\end{eqnarray*}
\end{example}

in $F_{2}\left[ x\right] .$

Let $f_{1}=x+1,$ $f_{2}=x^{2}+x+1,$ $f_{3}=x^{3}+x^{2}+1,$ $f_{4}=x^{3}+x+1,$
$f_{5}=x^{6}+x^{4}+x^{2}+x+1,$ $f_{6}=x^{6}+x^{5}+x^{4}+x^{2}+1$. Hence,

$f_{1}^{\ast }=x+1=$ $f_{1},$ $f_{2}^{\ast }=x^{2}+x+1=f_{2},$ $f_{3}^{\ast
}=x^{3}+x+1=f_{4},$ $f_{4}^{\ast }=x^{3}+x^{2}+1=$ $f_{3},$ $f_{5}^{\ast
}=x^{6}+x^{5}+x^{4}+x^{2}+1=$ $f_{6},$ $f_{6}^{\ast }=x^{6}+x^{4}+x^{2}+x+1=$
$f_{5}.$

Let $C=\left\langle vf_{i},\left( 1+v\right) f_{i},\left( 1+v^{2}\right)
f_{i}\right\rangle $ for $i=3,4,5,6$. Obviously $x^{n}-1$ is divisibly by $%
f_{i}f_{i}^{\ast }$, for $i$. Thus we have $C^{\perp }\subseteq C.$

Using Theorem $9$ and Theorem $15$ we can construct quantum codes.

\begin{theorem}
Let $C=vC_{1}\oplus \left( 1+v\right) C_{2}\oplus \left( 1+v^{2}\right)
C_{3} $ be a cyclic code of arbitrary length $n$ over $R$ with type $%
8^{k_{1}+k_{2}+k_{3}}4^{k_{4}}2^{k_{5}}.$ If $C_{i}^{\perp }\subseteq C_{i}$
where $i=1,2,3$ then $C^{\perp }\subseteq C$ and there exists a quantum
error-correcting code with parameters $\left[ \left[ 3n,6\left(
k_{1}+k_{2}+k_{3}\right) +4k_{4}+2k_{5}-3n,d_{L}\right] \right] $ where $%
d_{L}$ is the minimum Lee weights of $C.$
\end{theorem}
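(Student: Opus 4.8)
The plan is to transport the hypothesis along the Gray map $\psi$ and then apply the binary CSS construction (Theorem 9). First, since $C$ is cyclic and $C_{i}^{\perp }\subseteq C_{i}$ for $i=1,2,3$, the containment $C^{\perp }\subseteq C$ is already supplied by Corollary 16 (equivalently Theorem 15). So $C$ is a dual-containing cyclic code over $R$, and what remains is to show that the binary image $\psi \left( C\right) $ is dual containing, to read off its parameters $\left[ 3n,k,d\right] $, and to feed these into Theorem 9.

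To analyse $\psi \left( C\right) $ I would use Theorem 6, which gives $\psi \left( C\right) =C_{1}\otimes C_{2}\otimes C_{3}$; that is, $\psi \left( C\right) $ is the binary linear code of length $3n$ whose codewords carry a codeword of $C_{1}$ on the first $n$ coordinates, one of $C_{2}$ on the middle $n$, and one of $C_{3}$ on the last $n$. Because the Euclidean inner product on $F_{2}^{3n}$ is the sum of the inner products on these three length-$n$ blocks, the dual of such a block-product code splits as $\psi \left( C\right) ^{\perp }=C_{1}^{\perp }\otimes C_{2}^{\perp }\otimes C_{3}^{\perp }$; hence $C_{i}^{\perp }\subseteq C_{i}$ for every $i$ yields $\psi \left( C\right) ^{\perp }=C_{1}^{\perp }\otimes C_{2}^{\perp }\otimes C_{3}^{\perp }\subseteq C_{1}\otimes C_{2}\otimes C_{3}=\psi \left( C\right) $, so $\psi \left( C\right) $ is dual containing. (An equivalent route is to note, via Proposition 13 and the fact that $h_{i}^{\ast }$ generates $C_{i}^{\perp }$, that $\psi \left( C^{\perp }\right) =\psi \left( C\right) ^{\perp }$, and then to apply $\psi $ to $C^{\perp }\subseteq C$.) For the remaining parameters: $\psi \left( C\right) $ has length $3n$; its cardinality equals $\left\vert C\right\vert =\left\vert C_{1}\right\vert \left\vert C_{2}\right\vert \left\vert C_{3}\right\vert =8^{k_{1}+k_{2}+k_{3}}4^{k_{4}}2^{k_{5}}=2^{3\left( k_{1}+k_{2}+k_{3}\right) +2k_{4}+k_{5}}$ by Theorem 6 and the type of $C$, so $\dim \psi \left( C\right) =3\left( k_{1}+k_{2}+k_{3}\right) +2k_{4}+k_{5}=:k$; and by Theorem 1 (the Gray map is an isometry carrying the Lee weight on $R^{n}$ to the Hamming weight on $F_{2}^{3n}$) the minimum distance of $\psi \left( C\right) $ equals $d_{L}$, the minimum Lee distance of $C$.

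Finally I would apply Theorem 9 to the binary $\left[ 3n,k,d_{L}\right] $ code $\psi \left( C\right) $, which satisfies $\psi \left( C\right) ^{\perp }\subseteq \psi \left( C\right) $: this produces an $\left[ \left[ 3n,\,2k-3n,\,d_{L}\right] \right] $ quantum code, and substituting $k=3\left( k_{1}+k_{2}+k_{3}\right) +2k_{4}+k_{5}$ gives $2k-3n=6\left( k_{1}+k_{2}+k_{3}\right) +4k_{4}+2k_{5}-3n$, which is the asserted parameter set. I expect the only non-routine point to be the identity $\psi \left( C\right) ^{\perp }=C_{1}^{\perp }\otimes C_{2}^{\perp }\otimes C_{3}^{\perp }$ used to carry the dual-containing property from $R$ down to $F_{2}$; this rests on the block-product description of $\psi \left( C\right) $ from Theorem 6 together with the fact that the binary inner product decouples across the three blocks. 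The reduction to Corollary 16, the cardinality count, the distance equality, and the CSS step are all direct citations of earlier results.
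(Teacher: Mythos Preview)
Your proposal is correct and follows the route the paper intends: the paper does not spell out a proof of this theorem but merely says ``Using Theorem $9$ and Theorem $15$ we can construct quantum codes,'' and your argument is a faithful and complete unpacking of that sentence---Corollary~16/Theorem~15 for $C^{\perp}\subseteq C$, Theorem~6 for the block description $\psi(C)=C_{1}\otimes C_{2}\otimes C_{3}$ and the cardinality, Theorems~1--2 for the distance, and Theorem~9 for the CSS step. The one step you add beyond the paper's sketch, namely $\psi(C)^{\perp}=C_{1}^{\perp}\otimes C_{2}^{\perp}\otimes C_{3}^{\perp}$, is exactly the bridge needed to pass dual-containment to the binary side and is an immediate consequence of the block structure, so there is no gap.
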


\section{Examples}

\begin{example}
Let $n=8$
\end{example}

\begin{equation*}
x^{8}-1=\left( x+1\right) \left( x+1\right) \left( x+1\right) \left(
x+1\right) \left( x+1\right) \left( x+1\right) \left( x+1\right) \left(
x+1\right)
\end{equation*}%
in $F_{2}\left[ x\right] .$ Let $f_{1}\left( x\right) =f_{2}\left( x\right)
=f_{3}\left( x\right) =x^{3}+x^{2}+x+1$.

Thus $C=\left\langle vf_{1},\left( 1+v\right) f_{2},\left( 1+v^{2}\right)
f_{3}\right\rangle =\left\langle v^{2}\left( x^{3}+x^{2}+x+1\right)
\right\rangle $. $C$ is a linear code of length $8$ and minimum Lee weight $%
d_{L}=2$. The dual code $C^{\perp }=\left\langle vh_{1}^{\ast },\left(
1+v\right) h_{2}^{\ast },\left( 1+v^{2}\right) h_{3}^{\ast }\right\rangle
=\left\langle v^{2}\left( x^{5}+x^{4}+x+1\right) \right\rangle $ can be
obtained. Clearly, $C^{\perp }\subseteq C$. Hence we obtain a quantum code
with parameters $\left[ \left[ 24,6,2\right] \right] .$

\begin{example}
Let $n=8$
\end{example}

\begin{equation*}
x^{8}-1=\left( x+1\right) \left( x+1\right) \left( x+1\right) \left(
x+1\right) \left( x+1\right) \left( x+1\right) \left( x+1\right) \left(
x+1\right)
\end{equation*}%
in $F_{2}\left[ x\right] .$ Let $f_{1}\left( x\right) =f_{2}\left( x\right)
=f_{3}\left( x\right) =x^{2}+1$.

Thus $C=\left\langle vf_{1},\left( 1+v\right) f_{2},\left( 1+v^{2}\right)
f_{3}\right\rangle =\left\langle v^{2}\left( x^{2}+1\right) \right\rangle $. 
$C$ is a linear code of length $8$ and minimum Lee weight $d_{L}=2$.

The dual code $C^{\perp }=\left\langle vh_{1}^{\ast },\left( 1+v\right)
h_{2}^{\ast },\left( 1+v^{2}\right) h_{3}^{\ast }\right\rangle =\left\langle
v^{2}\left( x^{6}+x^{4}+x^{2}+1\right) \right\rangle $ can be obtained.
Clearly, $C^{\perp }\subseteq C.$ Hence we obtain a quantum code with
parameters $\left[ \left[ 24,12,2\right] \right] .$ Let $f_{1}\left(
x\right) =f_{2}\left( x\right) =f_{3}\left( x\right) =x+1.$Thus we obtain a
quantum code with parameters $\left[ \left[ 24,18,2\right] \right] $.

\begin{example}
Let $n=7$
\end{example}

\begin{equation*}
x^{7}-1=\left( x+1\right) \left( x^{3}+x+1\right) \left( x^{3}+x^{2}+1\right)
\end{equation*}%
in $F_{2}\left[ x\right] .$ Let $f_{1}\left( x\right) =f_{2}\left( x\right)
=f_{3}\left( x\right) =x^{3}+x+1$.

Thus $C=\left\langle vf_{1},\left( 1+v\right) f_{2},\left( 1+v^{2}\right)
f_{3}\right\rangle =\left\langle v^{2}\left( x^{3}+x+1\right) \right\rangle $%
. $C$ is a linear code of length $7$ and minimum Lee weight $d_{L}=3.$The
dual code $C^{\perp }=\left\langle vh_{1}^{\ast },\left( 1+v\right)
h_{2}^{\ast },\left( 1+v^{2}\right) h_{3}^{\ast }\right\rangle =\left\langle
v^{2}\left( x^{4}+x^{2}+x+1\right) \right\rangle $ can be obtained. Clearly, 
$C^{\perp }\subseteq C.$ Hence we obtain a quantum code with parameters $%
\left[ \left[ 21,3,3\right] \right] .$

\begin{example}
Let $n=15$
\end{example}

\begin{equation*}
x^{15}-1=\left( x+1\right) \left( x^{2}+x+1\right) \left( x^{4}+x+1\right)
\left( x^{4}+x^{3}+1\right) \left( x^{4}+x^{3}+x^{2}+x+1\right)
\end{equation*}%
in $F_{2}\left[ x\right] .$ Let $f_{1}\left( x\right) =f_{2}\left( x\right)
=f_{3}\left( x\right) =x^{4}+x+1$.

Thus $C=\left\langle vf_{1},\left( 1+v\right) f_{2},\left( 1+v^{2}\right)
f_{3}\right\rangle =\left\langle v^{2}\left( x^{4}+x+1\right) \right\rangle $%
. $C$ is a linear code of length $15$ and minimum Lee weight $d_{L}=3.$The
dual code $C^{\perp }=\left\langle vh_{1}^{\ast },\left( 1+v\right)
h_{2}^{\ast },\left( 1+v^{2}\right) h_{3}^{\ast }\right\rangle $ can be
obtained. Clearly, $C^{\perp }\subseteq C.$ Hence we obtain a quantum code
with parameters $\left[ \left[ 45,21,3\right] \right] .$

\begin{example}
Let $n=16$
\end{example}

\begin{equation*}
x^{16}-1=\left( x+1\right) ^{16}
\end{equation*}%
in $F_{2}\left[ x\right] .$ Let $f_{1}\left( x\right) =f_{2}\left( x\right)
=f_{3}\left( x\right) =x^{3}+x^{2}+x+1$.

Thus $C=\left\langle vf_{1},\left( 1+v\right) f_{2},\left( 1+v^{2}\right)
f_{3}\right\rangle =\left\langle v^{2}\left( x^{3}+x^{2}+x+1\right)
\right\rangle $. $C$ is a linear code of length $16$ and minimum Lee weight $%
d_{L}=2.$The dual code $C^{\perp }=\left\langle vh_{1}^{\ast },\left(
1+v\right) h_{2}^{\ast },\left( 1+v^{2}\right) h_{3}^{\ast }\right\rangle
=\left\langle v^{2}\left( x^{13}+x^{12}+x^{9}+x^{8}+x^{5}+x^{4}+x+1\right)
\right\rangle $ can be obtained. Clearly, $C^{\perp }\subseteq C.$ Hence we
obtain a quantum code with parameters $\left[ \left[ 48,30,2\right] \right]
. $ Let $f_{1}\left( x\right) =f_{2}\left( x\right) =f_{3}\left( x\right)
=x^{4}+1$. Thus we obtain a quantum code with parameters $\left[ \left[
48,24,2\right] \right] .$

\begin{example}
Let $n=21,$
\end{example}

\begin{eqnarray*}
x^{21}-1 &=&(x+1)(x^{2}+x+1)(x^{3}+x^{2}+1)(x^{3}+x+1)(x^{6}+x^{4}+x^{2}+x+1)
\\
&&(x^{6}+x^{5}+x^{4}+x^{2}+1)
\end{eqnarray*}%
in $F_{2}\left[ x\right] .$ Let $f_{1}\left( x\right) =f_{2}\left( x\right)
=f_{3}\left( x\right) =x^{6}+x^{5}+x^{4}+x^{2}+1$.

Thus $C=\left\langle vf_{1},\left( 1+v\right) f_{2},\left( 1+v^{2}\right)
f_{3}\right\rangle =\left\langle v^{2}\left( x^{4}+x+1\right) \right\rangle $%
. $C$ is a linear code of length $21$ and minimum Lee weight $d_{L}=3.$The
dual code $C^{\perp }=\left\langle vh_{1}^{\ast },\left( 1+v\right)
h_{2}^{\ast },\left( 1+v^{2}\right) h_{3}^{\ast }\right\rangle $ can be
obtained. Clearly, $C^{\perp }\subseteq C.$ Hence we obtain a quantum code
with parameters $\left[ \left[ 63,27,3\right] \right] .$ Let $f_{1}\left(
x\right) =f_{2}\left( x\right) =f_{3}\left( x\right) =x^{3}+x^{2}+1$. Thus
we obtain a quantum code with parameters $\left[ \left[ 63,45,3\right] %
\right] .$

\textbf{Acknowledgement}

We would like to thank S.T.Dougherty for his many suggestions which helped
us.

\section{References}

$\ \ \ \left[ 1\right] $\ A.M.Steane,Simple quantum error correcting codes,
Phys. Rev. A, $54\left( 1996\right) ,$ $4741-4751.$

$\left[ 2\right] $ A.R.Calderbank, E.M.Rains, P.M.Shor, N.J.A.Sloane,
Quantum error correction via codes over $GF\left( 4\right) ,$ IEEE Trans.
Inf. Theory, $44\left( 1998\right) ,$ $1369-1387.$

$\left[ 3\right] $ J.Qian, Quantum codes from cyclic codes over $%
F_{2}+vF_{2},$ Journal of Inform.$\&$ computational Science $10:6\left(
2013\right) ,$ $1715-1722.$

$\left[ 4\right] $ J.Qian, W.Ma, W.Gou, Quantum codes from cyclic codes over
finite ring, Int. J. Quantum Inform., $7\left( 2009\right) ,$ $1277-1283.$

$\left[ 5\right] $ P.W.Shor,Scheme for reducing decoherence in quantum
memory, Phys. Rev. A, $52\left( 1995\right) ,$ $2493-2496.$

$\left[ 6\right] $ S.Zhu, Y.Wang, M.Shi, Some Results On Cyclic Codes Over $%
F_{2}+vF_{2},$ IEEE Transactions on Information Theory,vol.$56$, No.$4$, $%
2010$

$\left[ 7\right] $ X.Kai,S.Zhu, Quaternary construction bof quantum codes
from cyclic codes over $F_{4}+uF_{4},$ Int. J. Quantum Inform., $9\left(
2011\right) ,$ $689-700.$

$\left[ 8\right] $ X.Xu, S.Yun, Y.Xiong, Generator Matrix of The Linear
Codes And Gray Images Over Ring $F_{2}+vF_{2}+v^{2}F_{2},$ Journal of
Applied Sciences $13\left( 4\right) \left( 2013\right) :650-653.$

$\left[ 9\right] $ X.Yin, W.Ma, Gray Map And Quantum Codes Over The Ring $%
F_{2}+uF_{2}+u^{2}F_{2},$ International Joint Conferences of IEEE
TrustCom-11,$2011.$

\end{document}